\documentclass[submission,copyright,creativecommons]{eptcs}

\usepackage{iftex}

\ifpdf
  \usepackage{underscore}         
  \usepackage[T1]{fontenc}        
\else
  \usepackage{breakurl}           
\fi

\usepackage{color}
\usepackage{amsmath, amssymb}
\usepackage{amsthm}

\theoremstyle{definition}
\newtheorem{theorem}{Theorem} 
\newtheorem{definition}{Definition}
\newtheorem{proposition}[definition]{Proposition}
\newtheorem{lemma}[definition]{Lemma}
\newtheorem{corollary}[definition]{Corollary}
\newtheorem{remark}[definition]{Remark}

\newcommand{\Not}{{\sim}}

\newcommand{\dto}{{\to_d}}
\newcommand{\rdland}{{\land_d^r}}
\newcommand{\dequiv}{{\leftrightarrow_d}}

\title{The Disjunction-Free Fragment of {\bf D$_2$} is Three-Valued\thanks{The main result was presented with a rather different narrative at \emph{Logic in Bochum 2}, \emph{CCPEA 2016} in Seoul, \emph{Paradoxes, Logic and Philosophy} in Beijing, \emph{V Workshop on Philosophical of Logic} in Buenos Aires, \emph{Prague Seminar on Paraconsistent Logic}, a colloquium in Munich and \emph{ISRALOG17} in Haifa. I owe a special debt of gratitude to Dave Ripley whose comments led me to rethink the overall presentation of the main result.  An earlier version of this article was presented at: \emph{Non-classical modalities} in Mexico City, \emph{the Eleventh Smirnov Readings in Logic} in Moscow, \emph{CoPS-FaM-2019} in Gda\'nsk, \emph{Paris-Bochum-Moscow Workshop in Mathematical Philosophy} in Paris and another colloquium in Munich. I would like to thank the organizers of these events for their kind invitations, warm hospitality and helpful discussions, as well as the audiences at these meeting for useful comments. I would also like to thank Jonas Rafael Becker Arenhart and Fabio De Martin Polo for helpful discussions and comments. Finally, but not the least, I would like to thank the referees for their very kind, detailed, and supportive comments that improved the presentation of the paper. The preparation of an earlier version of this article was supported by a Sofja Kovalevskaja Award of the Alexander von Humboldt-Foundation, funded by the German Ministry for Education and Research.}}
\author{
Hitoshi Omori
\institute{Graduate School of Information Sciences\\
Tohoku University\\
Sendai, Japan}
\email{hitoshiomori@gmail.com}
}

\begin{document}
\maketitle

\begin{abstract}
In this article, the disjunction-free fragment of Ja\'skowski's discussive logic {\bf D$_2$} in the language of classical logic is shown to be complete with respect to three- and four-valued semantics. As a byproduct, a rather simple axiomatization of the disjunction-free fragment of {\bf D$_2$} is obtained. Some implications of this result are also discussed.
\end{abstract}

\section{Introduction}

Stanis{\l}aw Ja\'skowski is known to be one of the modern founders of paraconsistent logic, together with Newton C. A. da Costa. The most important contribution of Ja\'skowski is that he clearly distinguished two notions for a theory, namely a theory being \emph{contradictory} (or \emph{inconsistent} in \cite{J1}) and a theory being \emph{trivial} (or \emph{overfilled} in \cite{J1}). In addition to this distinction, he also presented a system of paraconsistent logic known as {\bf D$_2$} which is often referred to as discursive logic or discussive logic (cf. \cite{J1, J2}). 

In this article, the disjunction-free fragment of Ja\'skowski's discussive logic is shown to be complete with respect to three- and four-valued semantics. Note here that {\bf D$_2$} is known to be not complete with respect to any finitely many-valued semantics, which is proved by Jerzy Kotas in \cite{Kotas1974b}. 
As a byproduct of the main result, a simple axiomatization of the disjunction-free fragment of Ja\'skowski's discussive logic in the language of classical logic is obtained. For the problem of axiomatization of {\bf D$_2$}, see \cite{OAD2}. 

\section{Semantics and proof theory}

The propositional languages in this article consist of a finite set $\mathsf{S}$ of propositional connectives and a countable set $\mathsf{Prop}$ of propositional variables. The languages are referred to as $\mathcal{L}$, $\mathcal{L}_r^-$, $\mathcal{L}_r$, $\mathcal{L}_l^-$ and $\mathcal{L}_l$ when $\mathsf{S}$ are $\{ \Not , \dto , \land , \lor \}$, $\{ \Not , \dto , \land_d^r \}$, $\{ \Not , \dto , \land_d^r , \lor \}$, $\{ \Not , \dto , \land_d^l \}$,  and $\{ \Not , \dto , \land_d^l , \lor \}$, respectively. Note that the languages $\mathcal{L}$ and $\mathcal{L}_r$ were introduced by Ja\'skowski in \cite{J1} and \cite{J2}, respectively.\footnote{As correctly pointed out by a referee, Ja\'skowski also included the discussive biconditional as a \emph{primitive} connective. However, in view of \cite[Proposition 1]{OAD2}, I will treat the discussive biconditional as a \emph{defined} connective.} The language $\mathcal{L}_l$ has been considered in a number of papers including \cite{dacosta1977new,Vasyukov}. The language $\mathcal{L}_r^-$ is the main one dealt with in this paper, but I will also refer to the other languages when it is helpful.\footnote{My emphasis on the languages $\mathcal{L}_r^-$ and $\mathcal{L}_r$ is a personal choice paying my respect to Ja\'skowski for introducing the first discussive conjunction in \cite{J2}. However, the main observation of the paper carries over for other languages, and some of the details are spelled out in \S\ref{subsec:left-discussive-conjunction} and \S\ref{subsec:non-discussive-conjunction}.} The set of formulas defined as usual in $\mathcal{L}$, $\mathcal{L}_r^-$ and $\mathcal{L}_l^-$, are denoted by $\mathsf{Form}$, $\mathsf{Form}_r^-$ and $\mathsf{Form}_l^-$, respectively. Moreover, a formula is denoted by $A$, $B$, $C$, etc. and a set of formulas by $\Gamma$, $\Delta$, $\Sigma$, etc.

\subsection{Semantics for the disjunction-free fragment of {\bf D$_2$}}

The original semantics of Ja\'skowski can be precisified by making use of translations into modal language, but here I follow Janusz Ciuciura (cf. \cite{ciuciura2008frontiers}) who stated the semantics without the help of translation.

\begin{definition}[{\bf D$_2^-$}-model]
{\bf D$_2^-$}-model for $\mathcal{L}_r^-$ is a pair $\langle W , v \rangle$ where $W$ is a non-empty set and $v: W\times \mathsf{Prop} \longrightarrow \{ 0 , 1 \}$, an assignment of truth values to state-variable pairs. Valuations $v$ are then extended to interpretations $I$ to state-formula pairs by the following conditions.
\begin{itemize}
\setlength{\parskip}{0cm}
\setlength{\itemsep}{0cm}
\item $I(w, p)=v(w, p)$, for all $w\in W$ and for all $p\in\mathsf{Prop}$;
\item $I(w, \Not A)=1$ iff  $I(w, A)=0$; 
\item $I(w, A \rdland B)=1$ iff $I(w, A)=1 \text{ and for some } x\in W (I(x, B)=1)$; 
\item $I(w, A \dto B)=1$ iff for all $x\in W (I(x, A)=0) \text{ or } I(w, B)=1$.
\end{itemize}
Furthermore, $\Gamma\models_d A$ iff for every {\bf D$_2^-$}-model $\langle W , v \rangle$, if for all $B\in \Gamma$, there is $x\in W$ such that $I(x, B)=1$, then $I(y, A)=1$ for some $y\in W$.
\end{definition}

\begin{remark}
Note that the semantic consequence relation is defined in an unusual way, which is not a mistake, but a definition that reflects the original idea of Ja\'skowski. 
\end{remark}

Now, by considering a special case of the Kripke semantics in which the cardinality of $W$ is two, the following four-valued semantics is obtained.

\begin{definition}\label{def:four-valuedD2-}
A \emph{four-valued {\bf D$_2^-$}-interpretation} of $\mathcal{L}_r^-$ is a function $v: \mathsf{Prop}\longrightarrow \{ \mathbf{1}, \mathbf{i}, \mathbf{j}, \mathbf{0} \}$. Given a four-valued {\bf D$_2^-$}-interpretation $v$, this is extended to a function $I$ that assigns every formula a truth value by truth functions depicted in the form of truth tables as follows:
\begin{displaymath}
{\small
\begin{tabular}{c|c}
$A$  &  $\Not A$\\ \hline
$\mathbf{1}$ &  $\mathbf{0}$\\
$\mathbf{i}$ & $\mathbf{j}$\\
$\mathbf{j}$ & $\mathbf{i}$\\
$\mathbf{0}$ &  $\mathbf{1}$\\
\end{tabular}
\quad
\begin{tabular}{c|cccc}
$A \rdland B$ & $\mathbf{1}$ & $\mathbf{i}$ & $\mathbf{j}$ & $\mathbf{0}$ \\ \hline
$\mathbf{1}$ & $\mathbf{1}$ & $\mathbf{1}$ & $\mathbf{1}$ & $\mathbf{0}$\\
$\mathbf{i}$ & $\mathbf{i}$ & $\mathbf{i}$ & $\mathbf{i}$ & $\mathbf{0}$\\
$\mathbf{j}$ & $\mathbf{j}$ & $\mathbf{j}$ & $\mathbf{j}$ & $\mathbf{0}$\\
$\mathbf{0}$ & $\mathbf{0}$ & $\mathbf{0}$ & $\mathbf{0}$ & $\mathbf{0}$\\
\end{tabular}
\quad
\begin{tabular}{c|cccc}
$A \dto B$ & $\mathbf{1}$ & $\mathbf{i}$ & $\mathbf{j}$ & $\mathbf{0}$ \\ \hline
$\mathbf{1}$ & $\mathbf{1}$ & $\mathbf{i}$ & $\mathbf{j}$ & $\mathbf{0}$\\
$\mathbf{i}$ & $\mathbf{1}$ & $\mathbf{i}$ & $\mathbf{j}$ & $\mathbf{0}$\\
$\mathbf{j}$ & $\mathbf{1}$ & $\mathbf{i}$ & $\mathbf{j}$ & $\mathbf{0}$\\
$\mathbf{0}$ & $\mathbf{1}$ & $\mathbf{1}$ & $\mathbf{1}$ & $\mathbf{1}$
\end{tabular}
}
\end{displaymath}
\noindent Note that the set of designated values, denoted by $\mathcal{D}_4$, is $\{ \mathbf{1}, \mathbf{i}, \mathbf{j} \}$. The semantic consequence relation $\models_4^-$ is defined in terms of preservation of designated values.
\end{definition}

\begin{remark}\label{rem:special-case}
Assume that $W{=}\{ w_1, w_2 \}$. Then, 
\begin{itemize}
\setlength{\parskip}{0cm}
\setlength{\itemsep}{0cm}
\item $v(A)=\mathbf{1}$ corresponds to $v(w_1, A)=1$ and $v(w_2, A)=1$,
\item $v(A)=\mathbf{i}$ corresponds to $v(w_1, A)=1$ and $v(w_2, A)=0$,
\item $v(A)=\mathbf{j}$ corresponds to $v(w_1, A)=0$ and $v(w_2, A)=1$,
\item $v(A)=\mathbf{0}$ corresponds to $v(w_1, A)=0$ and $v(w_2, A)=0$.
\end{itemize}
Note also that the unusual definition of the semantic consequence relation is here reflected as having three designated values. 
\end{remark}

In the above semantics, the intermediate values are representing the two possibilities depending on which of the two states or worlds falsifies the sentence. In fact, these two possibilities can be ``merged'', and the third value can stand for the case in which the two states or worlds disagree. As a result, the following three-valued semantics is obtained.

\begin{definition}
A \emph{three-valued {\bf D$_2^-$}-interpretation} of $\mathcal{L}_r^-$ is a function $v: \mathsf{Prop}\longrightarrow \{ \mathbf{1}, \mathbf{i}, \mathbf{0} \}$. Given a three-valued {\bf D$_2^-$}-interpretation $v$, this is extended to a function $I$ that assigns every formula a truth value by truth functions depicted in the form of truth tables as follows:
\begin{displaymath}
{\small
\begin{tabular}{c|c}
$A$  &  $\Not A$\\
\hline
$\mathbf{1}$ &  $\mathbf{0}$\\
$\mathbf{i}$ & $\mathbf{i}$\\
$\mathbf{0}$  &  $\mathbf{1}$\\
\end{tabular}
\quad
\begin{tabular}{c|cccc}
$A \rdland B$ & $\mathbf{1}$ & $\mathbf{i}$ &  $\mathbf{0}$ \\ \hline
$\mathbf{1}$ & $\mathbf{1}$ & $\mathbf{1}$ & $\mathbf{0}$\\
$\mathbf{i}$ & $\mathbf{i}$ & $\mathbf{i}$ & $\mathbf{0}$\\
$\mathbf{0}$ & $\mathbf{0}$ & $\mathbf{0}$ &  $\mathbf{0}$\\
\end{tabular}
\quad
\begin{tabular}{c|cccc}
$A \dto B$ & $\mathbf{1}$ & $\mathbf{i}$  & $\mathbf{0}$ \\ \hline
$\mathbf{1}$ & $\mathbf{1}$ & $\mathbf{i}$ & $\mathbf{0}$\\
$\mathbf{i}$ & $\mathbf{1}$ & $\mathbf{i}$ & $\mathbf{0}$\\
$\mathbf{0}$ & $\mathbf{1}$ & $\mathbf{1}$ & $\mathbf{1}$
\end{tabular}
}
\end{displaymath}
\noindent Note that the set of designated values, denoted by $\mathcal{D}_3$, is $\{ \mathbf{1}, \mathbf{i} \}$. The semantic consequence relation $\models_3$ is defined in terms of preservation of designated values.
\end{definition}

\begin{remark}
From a purely technical viewpoint, the above truth table for negation is exactly the one for the three-valued logic developed by {\L}ukasiewicz, as well as for the Logic of Paradox (cf. \cite{Priest79}). Moreover, the truth table for conditional is identical with the one in {\bf RM$_3^\supset$} (cf. \cite{Avron1986}), {\bf LFI1} (cf. \cite{CMdA2000}) and {\bf CLuNs} (cf. \cite{BatDeCl2004}), among many other systems.
\end{remark}

\begin{remark}
Note that in view of a general result established by Arnon Avron, Ofer Arieli and Anna Zamansky, it follows that $\models_3$ is maximally paraconsistent in the strong sense, and thus maximal with respect to extended classical logic, by \cite[Corollary 3.6]{AAZ2011}. 
\end{remark}

\subsection{Proof system for the disjunction-free fragment of {\bf D$_2$}}

I now turn to the proof theory which is presented in terms of a Hilbert-style calculus.
\begin{definition}\label{def:D2-}
The system {\bf D$_2^-$} consists of the following axiom schemata and a rule of inference, where $A\dequiv B$ abbreviates $(A\dto B)\rdland (B\dto A)$.

\noindent
\begin{minipage}{.6\textwidth}
\begin{align*}
& A \dto (B \dto A) \tag{Ax1} \\
& (A \dto (B \dto C))\dto ((A \dto B)  \dto (A \dto C)) \tag{Ax2} \\
& ((A\dto B)\dto A)\dto A \tag{Ax3} \\
& (A \rdland B) \dto A \tag{Ax4} \label{CKpqp}\\
& (A \rdland B) \dto B \tag{Ax5} \label{CKpqq}\\
& (C\dto A) \dto ((C\dto B) \dto (C\dto (A \rdland B))) \tag{Ax6}
\end{align*}
\end{minipage}
\begin{minipage}{.4\textwidth}
\begin{align*}
& (\Not A \dto A)\dto A \tag{Ax7} \label{CCNppp}\\
& \Not \Not A \dequiv A \tag{Ax8} \label{ENNpp}\\
& \Not (A \rdland B) \dequiv (B\dto \Not A) \tag{Ax9} \label{ENKpqCqNp} \\
& \Not (A\dto B) \dequiv (A \rdland \Not B) \tag{Ax10} \label{ENCpqKpNq}\\
& \frac{\ A \ \ \ \ \ A \dto B \ }{B}  \tag{MP}\label{MP}
\end{align*}
\end{minipage}

\medskip
\noindent Finally, $\Gamma \vdash A$ iff there is a sequence of formulas $B_1, \dots, B_n, A$ ($n\geq 0$), called \emph{a derivation}, such that every formula in the sequence either (i) belongs to $\Gamma$; (ii) is an axiom of {\bf D$_2^-$}; (iii) is obtained by (MP) from formulas preceding it in the sequence. 
\end{definition}

\begin{remark}
Note that the only unusual axiom in the literature of paraconsistent logic is \eqref{ENKpqCqNp}. 
\end{remark}

Before moving further, note that the deduction theorem holds for $\vdash$.

\begin{proposition}
For all $\Gamma\cup \{ A, B \} \subseteq \mathsf{Form}_r^-$, $\Gamma,A\vdash B$ iff $\Gamma\vdash A\dto B$.
\end{proposition}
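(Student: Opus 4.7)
The plan is to use the standard Hilbert-style deduction theorem argument. Since {\bf D$_2^-$} contains (Ax1) and (Ax2), which are exactly Frege's two implicational axioms, together with (MP) as the sole rule of inference, the textbook proof carries over unchanged; none of the axioms (Ax3)--(Ax10) play any role.

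For the right-to-left direction, suppose $\Gamma\vdash A\dto B$. Enlarging the set of hypotheses preserves derivability, so $\Gamma,A\vdash A\dto B$, and trivially $\Gamma,A\vdash A$; one application of (MP) then yields $\Gamma,A\vdash B$.

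For the left-to-right direction, I would proceed by induction on the length $n$ of a derivation $B_1,\dots,B_n$ witnessing $\Gamma,A\vdash B$, showing that $\Gamma\vdash A\dto B_i$ for every $i\le n$. Three cases arise according to Definition~\ref{def:D2-}. If $B_i$ is an axiom or belongs to $\Gamma$, then $\Gamma\vdash B_i$, and the instance $B_i\dto(A\dto B_i)$ of (Ax1) combined with (MP) yields $\Gamma\vdash A\dto B_i$. If $B_i$ is $A$ itself, it suffices to derive $A\dto A$ as a theorem, which is done in the familiar way using (Ax1) and (Ax2). If $B_i$ is obtained by (MP) from earlier $B_j$ and $B_k=B_j\dto B_i$, then by the induction hypothesis $\Gamma\vdash A\dto B_j$ and $\Gamma\vdash A\dto(B_j\dto B_i)$; the instance $(A\dto(B_j\dto B_i))\dto((A\dto B_j)\dto(A\dto B_i))$ of (Ax2) together with two applications of (MP) delivers $\Gamma\vdash A\dto B_i$.

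There is no genuine obstacle: the argument is just the classical deduction theorem, and the three cases above exhaust the clauses in the definition of $\vdash$. The only thing worth briefly flagging is that the discussive conjunction $\rdland$ and the negation $\Not$ never enter the proof, so the particular shape of axioms (Ax4)--(Ax10) is irrelevant.
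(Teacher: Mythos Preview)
Your proof is correct and is exactly the standard argument the paper has in mind: the paper does not spell out a proof at all, simply recording the proposition as an immediate consequence of having (Ax1), (Ax2) and (MP) in the system. Your write-up supplies precisely those details, and nothing more is needed.
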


\section{Soundness and Completeness for the three-valued semantics}

I now turn to prove that the proof system introduced in the previous section is sound and complete with respect to the three-valued semantics. 

\subsection{Soundness}
I begin with the soundness which is easy as usual.
\begin{proposition}[Soundness]\label{prop:soundness}
For all $\Gamma\cup \{ A \}\subseteq \mathsf{Form}_r^-$, if $\Gamma \vdash A$ then $\Gamma\models_3 A$.
\end{proposition}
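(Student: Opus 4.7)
The plan is a standard induction on the length of a derivation $B_1,\dots,B_n,A$ witnessing $\Gamma\vdash A$: I show that whenever every formula in $\Gamma$ is designated under a three-valued interpretation, each $B_k$ (and hence $A$) is designated too. Premises are handled by assumption, so only the axioms and the rule (MP) require work.

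For (MP), the key observation is that in the $\dto$-table the only entries yielding $\mathbf{0}$ sit in the column with second coordinate $\mathbf{0}$ and first coordinate designated; equivalently, $A\dto B$ is designated iff $A=\mathbf{0}$ or $B$ is designated. Hence if both $A$ and $A\dto B$ are designated, $B$ must be. This same summary shows that on the partition designated/non-designated the connective $\dto$ behaves exactly as the classical conditional, so (Ax1)--(Ax3) collapse to the classical tautologies $p\supset(q\supset p)$, the $K$-axiom, and Peirce's law, and are valid by the same reasoning as in classical propositional logic. The analogous summary for $\rdland$ is that $A\rdland B$ is designated iff both conjuncts are, and that $A\rdland B=\mathbf{0}$ iff $A=\mathbf{0}$ or $B=\mathbf{0}$; this takes care of (Ax4)--(Ax6). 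Axioms (Ax7) and (Ax8) reduce to inspection of the three-row negation table together with the summary for $\dto$.

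The main obstacle is checking (Ax9) and (Ax10), the two semi-De Morgan principles, because $\rdland$ is not commutative and the biconditional $X\dequiv Y$ is itself defined via $\rdland$. I first note that $X\dequiv Y=(X\dto Y)\rdland (Y\dto X)$ is designated precisely when both $X\dto Y$ and $Y\dto X$ are designated, and by the summary for $\dto$ this amounts to requiring $X=\mathbf{0}\iff Y=\mathbf{0}$. For (Ax9), $\Not(A\rdland B)=\mathbf{0}$ holds iff $A\rdland B=\mathbf{1}$ iff $A=\mathbf{1}$ and $B$ is designated, while $B\dto\Not A=\mathbf{0}$ holds iff $B$ is designated and $\Not A=\mathbf{0}$, i.e.\ iff $A=\mathbf{1}$ and $B$ is designated; the two conditions coincide. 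For (Ax10), the argument is parallel: $\Not(A\dto B)=\mathbf{0}$ iff $A\dto B=\mathbf{1}$ iff $A=\mathbf{0}$ or $B=\mathbf{1}$, while $A\rdland\Not B=\mathbf{0}$ iff $A=\mathbf{0}$ or $\Not B=\mathbf{0}$ iff $A=\mathbf{0}$ or $B=\mathbf{1}$, and the zero-loci again match. With these cases settled, the induction goes through and $\Gamma\vdash A$ entails $\Gamma\models_3 A$.
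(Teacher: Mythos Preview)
Your proof is correct and takes essentially the same approach as the paper, which simply states that the result follows ``by a straightforward verification that each instance of each axiom schema always takes a designated value, and that (MP) preserves designated values''; you have spelled out that verification in detail. One small remark: (Ax8) also involves the defined biconditional $\dequiv$ (and hence $\rdland$), so the reduction you state for it uses the summaries for both $\dto$ and $\rdland$, together with the observation $\Not\Not A=A$ in all three cases---but all of the needed facts are already in your argument.
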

\begin{proof}
By a straightforward verification that each instance of each axiom schema always takes a designated value, and that (MP) preserves designated values.
\end{proof}

\subsection{Completeness}

For the completeness, some terminologies are needed. To this end, I deploy those from \cite{Schumm1975henkin} with a slightly different term using \emph{non-trivial} instead of \emph{consistent}.

\begin{definition}[Schumm]
For $\Sigma \cup \{ B \}\subseteq \mathsf{Form}_r^-$, $\Sigma$ is \emph{maximally non-trivial} iff (i) $\Sigma\not\vdash A$ for some $A\in\mathsf{Form}_r^-$ and (ii) for every $A\in \mathsf{Form}_r^-$, if $A \not\in \Sigma$ then $\Sigma \cup \{ A \}\vdash B$ for all $B\in\mathsf{Form}_r^-$. 
\end{definition}


\begin{remark}
Note that if $\Sigma$ is maximally non-trivial, then $\Sigma$ is a theory, i.e. closed under $\vdash$. 
\end{remark}

Then the following well-known lemma is obtained. The proof is given in \cite[Theorem 8]{Schumm1975henkin}. 

\begin{lemma}[Schumm]\label{lem:Lindenbaum}
For all $\Sigma\cup \{ A \}\subseteq \mathsf{Form}_r^-$, suppose that $\Sigma\not\vdash A$. Then, there is a $\Pi \supseteq \Sigma$ such that $\Pi$ is maximally non-trivial and $A\not\in\Pi$.
\end{lemma}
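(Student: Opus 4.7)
The plan is a standard Lindenbaum-style saturation argument. Since the language $\mathcal{L}_r^-$ is built on a countable set of propositional variables, $\mathsf{Form}_r^-$ is countable; enumerate it as $B_1, B_2, \ldots$. Starting from $\Pi_0 := \Sigma$, define recursively
\[
\Pi_{n+1} := \begin{cases} \Pi_n \cup \{B_{n+1}\} & \text{if } \Pi_n \cup \{B_{n+1}\} \not\vdash A, \\ \Pi_n & \text{otherwise,} \end{cases}
\]
and set $\Pi := \bigcup_{n \geq 0} \Pi_n$. A routine induction shows $\Pi_n \not\vdash A$ for every $n$, and the finiteness of derivations then gives $\Pi \not\vdash A$; in particular $A \notin \Pi$ and $\Pi$ is closed under $\vdash$, which already yields clause (i) of maximally non-trivial.

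For clause (ii), the construction immediately gives the weaker ``$A$-saturation'' property: for every $C \notin \Pi$, $\Pi \cup \{C\} \vdash A$, equivalently $\Pi \vdash C \dto A$ by the deduction theorem (Proposition~3). The crucial intermediate step is to show that $\Pi$ decides each formula under negation, i.e.\ $C \notin \Pi$ implies $\Not C \in \Pi$. Assuming both $C, \Not C \notin \Pi$, $A$-saturation yields $\Pi \vdash C \dto A$ and $\Pi \vdash \Not C \dto A$; contraposition through \eqref{ENCpqKpNq}, \eqref{ENKpqCqNp}, and \eqref{ENNpp} then produces $\Pi \vdash \Not A \dto \Not C$ and $\Pi \vdash \Not A \dto C$. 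Chaining the second with $C \dto A$ gives $\Pi \vdash \Not A \dto A$, and \eqref{CCNppp} then forces $\Pi \vdash A$, a contradiction.

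With this negation completeness, take any $C \notin \Pi$ and any $B$. If $B \in \Pi$, then $\Pi \cup \{C\} \vdash B$ trivially. If $B \notin \Pi$, were $C \dto B \notin \Pi$ then $\Not(C \dto B) \in \Pi$, whence \eqref{ENCpqKpNq} would put $C \rdland \Not B \in \Pi$, and \eqref{CKpqp} would force $C \in \Pi$, a contradiction. So $C \dto B \in \Pi$, and MP yields $\Pi \cup \{C\} \vdash B$, establishing clause (ii).

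The main obstacle I expect to absorb most of the real work is the upgrade from $A$-saturation to the negation completeness used in the second paragraph. Classically this would be automatic from ex contradictione, but in a paraconsistent setting it must be engineered indirectly from the negation-translation axioms \eqref{ENNpp}, \eqref{ENKpqCqNp}, \eqref{ENCpqKpNq} together with the Peirce-like principle \eqref{CCNppp}; the abstract version of this manipulation is packaged in \cite[Theorem~8]{Schumm1975henkin}.
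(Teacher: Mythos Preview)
The paper does not supply its own proof of this lemma; it simply defers to \cite[Theorem~8]{Schumm1975henkin}. Your construction of $\Pi$ and the observation that it is $A$-saturated are standard and correct, but the upgrade to full maximal non-triviality contains a genuine error.

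The contraposition step in your second paragraph is invalid in this logic. Neither $(C \dto A) \dto (\Not A \dto \Not C)$ nor $(\Not C \dto A) \dto (\Not A \dto C)$ is derivable: in the three-valued semantics take $C=\mathbf{1}$, $A=\mathbf{i}$ for the first (antecedent $\mathbf{i}$, consequent $\mathbf{0}$) and $A=\mathbf{i}$, $C=\mathbf{0}$ for the second (again antecedent $\mathbf{i}$, consequent $\mathbf{0}$). By soundness these implications cannot be obtained from the axioms, so your appeal to \eqref{ENNpp}, \eqref{ENKpqCqNp}, \eqref{ENCpqKpNq} cannot succeed. The underlying reason is that $\dequiv$ is not a congruence with respect to $\Not$: $X\dequiv Y$ being designated does not force $\Not X\dequiv\Not Y$ to be designated, so the negation-translation axioms do not license substitution under $\Not$.

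Schumm's actual argument---the one the paper is invoking---is both simpler and purely implicational; it uses only (Ax1)--(Ax3) and never touches negation (his paper is about the pure implicational calculus). The key observation is that once $\Pi$ is $A$-saturated with $A\notin\Pi$, one has $A\dto B\in\Pi$ for \emph{every} formula $B$: if $A\dto B\notin\Pi$, then $A$-saturation gives $\Pi\vdash (A\dto B)\dto A$, and Peirce's law (Ax3) yields $\Pi\vdash A$, contradicting $A\notin\Pi$. Now for arbitrary $C\notin\Pi$ and arbitrary $B$, $A$-saturation gives $\Pi, C\vdash A$, and together with $A\dto B\in\Pi$ one obtains $\Pi, C\vdash B$ by (MP). This establishes clause~(ii) directly, with no detour through negation completeness.
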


Moreover, the following lemma, which will be useful later, is also easy to prove.

\begin{lemma}\label{lem:to-in-pdc}
If $\Sigma$ is maximally non-trivial, then $\Sigma\vdash A\dto B$ iff ($\Sigma\not\vdash A$ or $\Sigma\vdash B$).
\end{lemma}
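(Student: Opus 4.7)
The plan is a routine ``maximal non-trivial set'' argument that leans on the three tools already in place: Modus Ponens, the deduction theorem (the preceding proposition), and the observation, recorded in the remark, that a maximally non-trivial set is closed under $\vdash$, so $\Sigma \not\vdash A$ is equivalent to $A \notin \Sigma$.

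For the left-to-right direction, I would argue contrapositively: assume $\Sigma \vdash A \dto B$ and that it is not the case that $\Sigma \not\vdash A$ or $\Sigma \vdash B$. Then in particular $\Sigma \vdash A$, so by (MP) we get $\Sigma \vdash B$, contradicting the assumption. Equivalently, one can simply observe that if $\Sigma \vdash A \dto B$ and $\Sigma \vdash A$, then $\Sigma \vdash B$ by (MP), which delivers the required disjunction.

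For the right-to-left direction, I would split into the two cases of the disjunction. If $\Sigma \vdash B$, then from the axiom (Ax1), $B \dto (A \dto B)$, and (MP) we immediately get $\Sigma \vdash A \dto B$. If instead $\Sigma \not\vdash A$, then since $\Sigma$ is closed under $\vdash$ we have $A \notin \Sigma$; by maximal non-triviality this gives $\Sigma \cup \{A\} \vdash C$ for every $C \in \mathsf{Form}_r^-$, and in particular $\Sigma \cup \{A\} \vdash B$. Applying the deduction theorem then yields $\Sigma \vdash A \dto B$.

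I do not foresee a genuine obstacle here: the only moving parts are (Ax1), (MP), the deduction theorem, and the defining property of maximal non-triviality, all of which are already available. The one step that deserves a moment of care is the appeal to closure under $\vdash$ to pass from $\Sigma \not\vdash A$ to $A \notin \Sigma$, since maximality is formulated in terms of membership rather than derivability; but this is exactly what the earlier remark records.
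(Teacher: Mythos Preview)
Your argument is correct: the left-to-right direction is immediate from (MP), and the right-to-left direction is handled cleanly by (Ax1) in the case $\Sigma\vdash B$ and by maximality plus the deduction theorem in the case $\Sigma\not\vdash A$, with the remark on closure under $\vdash$ bridging $\Sigma\not\vdash A$ and $A\notin\Sigma$. The paper does not actually supply a proof of this lemma, merely flagging it as ``easy to prove,'' so there is no alternative approach to compare against; your write-up is exactly the kind of routine verification the author is gesturing at.
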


\begin{definition}
Let $\Sigma$ be maximally non-trivial. Then, let $v_\Sigma$ from $\mathsf{Prop}$ to $\{\mathbf{1}, \mathbf{i}, \mathbf{0}\}$ be defined as follows:
\[ 
v_\Sigma (p){=}\mathbf{1} \text{ iff } \Sigma\not\vdash \Not p \text{\quad and \quad} 
v_\Sigma (p){=}\mathbf{i}  \text{ iff } \Sigma\vdash p \text{ and } \Sigma\vdash \Not p \text{\quad and \quad} 
v_\Sigma (p){=}\mathbf{0} \text{ iff } \Sigma\not \vdash p
\]
\end{definition}

I need one more lemma which is the key for the completeness result.
\begin{lemma}\label{lem:canonical-valuation} 
If $\Sigma$ is maximally non-trivial, then the following holds for all $B\in\mathsf{Form}_r^-$.
\[ 
v_\Sigma (B){=}\mathbf{1} \text{ iff } \Sigma\not\vdash \Not B \text{\quad and \quad} 
v_\Sigma (B){=}\mathbf{i}  \text{ iff } \Sigma\vdash B \text{ and } \Sigma\vdash \Not B \text{\quad and \quad} 
v_\Sigma (B){=}\mathbf{0} \text{ iff } \Sigma\not \vdash B
\]
\end{lemma}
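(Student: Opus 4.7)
The plan is to prove the lemma by induction on the complexity of $B$, with the base case $B=p\in\mathsf{Prop}$ holding by the definition of $v_\Sigma$. Before the inductive step, I would record two preliminary facts that do most of the work. First, combining Ax7 with Lemma~\ref{lem:to-in-pdc} shows that for any formula $C$, at least one of $\Sigma\vdash C$ and $\Sigma\vdash \Not C$ must hold: if $\Sigma\not\vdash \Not C$, then Lemma~\ref{lem:to-in-pdc} yields $\Sigma\vdash \Not C\dto C$, and MP on Ax7 delivers $\Sigma\vdash C$. This, together with the evident mutual exclusivity of the three clauses, ensures that $v_\Sigma$ is a total function on $\mathsf{Prop}$ (and, by the induction below, on all formulas). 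Second, Ax4--Ax6 with MP give the ``conjunction characterisation'': $\Sigma\vdash A\rdland B$ iff $\Sigma\vdash A$ and $\Sigma\vdash B$. The forward direction is immediate from Ax4 and Ax5; the converse uses Ax6 instantiated at $A$ in place of $C$, combined with the derivable theorem $A\dto A$ and Ax1.

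For the inductive step on $B=\Not C$, I would invoke Ax8 to rewrite derivability of $\Not\Not C$ as derivability of $C$, and then verify each of the three clauses from the IH for $C$ using the truth table for $\Not$ (the roles of $\mathbf{1}$ and $\mathbf{0}$ swap, and $\mathbf{i}$ is a fixed point).

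For $B=C\rdland D$, I would run a nine-way case split on the pair $(v_\Sigma(C),v_\Sigma(D))$. Derivability of $C\rdland D$ is controlled by the conjunction characterisation, while derivability of $\Not(C\rdland D)$ is controlled by Ax9, which together with Lemma~\ref{lem:to-in-pdc} reduces $\Sigma\vdash \Not(C\rdland D)$ to ``$\Sigma\not\vdash D$ or $\Sigma\vdash \Not C$.'' Matching each outcome against the three-valued table for $\rdland$ yields the required clause in each case. The case $B=C\dto D$ is structurally analogous: Lemma~\ref{lem:to-in-pdc} directly handles $\Sigma\vdash C\dto D$, while Ax10 reduces $\Sigma\vdash \Not(C\dto D)$ to $\Sigma\vdash C\rdland \Not D$, which by the conjunction characterisation amounts to ``$\Sigma\vdash C$ and $\Sigma\vdash \Not D$.''

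I expect the main obstacle to be purely clerical: the bookkeeping over nine subcases each for $\rdland$ and $\dto$, while tracking two independent inductive hypotheses. The only conceptually delicate point is the appeal to Ax7 to rule out the ``gap'' case where neither $\Sigma\vdash C$ nor $\Sigma\vdash \Not C$ holds; this is precisely what makes a three-valued (rather than four-valued) definition of $v_\Sigma$ coherent.
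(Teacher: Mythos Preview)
Your proposal is correct and follows essentially the same inductive strategy as the paper. The only differences are organisational: the paper splits each connective case three ways on the value of the compound formula rather than nine ways on the values of the subformulas, and the paper simply declares the well-definedness of $v_\Sigma$ ``obvious'' whereas you spell out the appeal to Ax7 via Lemma~\ref{lem:to-in-pdc} to exclude gaps---a worthwhile clarification.
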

\begin{proof}
Note first that the well-definedness of $v_\Sigma$ is obvious. Then the desired result is proved by induction on the the construction of $B$. The base case, for atomic formulas, is obvious by the definition. For the induction step, the cases are split based on the connectives. 

\smallskip

\noindent {\bf Case 1.} If $B=\Not C$, then there are the following three cases.
\begin{align*}
v_\Sigma(\Not C)=\mathbf{1} 
                         &\text{ iff } v_\Sigma(C)=\mathbf{0} & \text{by the definition of $v_\Sigma$} \\
                         &\text{ iff } \Sigma\not\vdash C & \text{by IH} \\
                         &\text{ iff } \Sigma\not\vdash \Not \Not C & \text{by \eqref{ENNpp}} 
\end{align*}
\begin{align*}
v_\Sigma(\Not C)=\mathbf{i}
                         &\text{ iff } v_\Sigma(C)=\mathbf{i} & \text{by the definition of $v_\Sigma$} \\
                         &\text{ iff } \Sigma\vdash \Not C \text{ and } \Sigma\vdash C & \text{by IH}  \\
                         &\text{ iff } \Sigma\vdash \Not C \text{ and } \Sigma\vdash \Not \Not C & \text{by \eqref{ENNpp}}
\end{align*}
\begin{align*}
v_\Sigma(\Not C)=\mathbf{0}
                         &\text{ iff } v_\Sigma(C)=\mathbf{1} & \text{by the definition of $v_\Sigma$} \\
                         &\text{ iff } \Sigma\not\vdash \Not C & \text{by IH}
\end{align*}

\noindent {\bf Case 2.} If $B=C {\dto} D$, then there are the following three cases.
\begin{align*}
v_\Sigma(C\dto D)=\mathbf{1}
                         &\text{ iff } v_\Sigma(C)=\mathbf{0} \text{ or } v_\Sigma(D)=\mathbf{1} & \text{by the definition of $v_\Sigma$} \\
                         &\text{ iff } \Sigma\not\vdash C \text{ or } \Sigma\not\vdash \Not D & \text{by IH} \\
                         &\text{ iff } \Sigma\not\vdash (C\land \Not D) & \text{by $\Sigma$ is a theory} \\
                         &\text{ iff } \Sigma\not\vdash \Not (C\dto D) & \text{by \eqref{ENCpqKpNq}}
\end{align*}
\begin{align*}
v_\Sigma(C\dto D)=\mathbf{i}
                         &\text{ iff } v_\Sigma(C)\neq\mathbf{0} \text{ and } v_\Sigma(D)=\mathbf{i} & \text{by the definition of $v_\Sigma$} \\
                         &\text{ iff } \Sigma\vdash C \text{ and } (\Sigma\vdash D \text{ and } \Sigma\vdash \Not D) & \text{by IH} \\
                         &\text{ iff } (\Sigma\not\vdash C \text{ or } \Sigma\vdash D)\text{ and } \Sigma\vdash (C\land \Not D) & \text{$\Sigma$ is a theory} \\
                         &\text{ iff } \Sigma\vdash (C\dto D) \text{ and }\Sigma\vdash \Not (C\dto D) & \text{by Lemma \ref{lem:to-in-pdc} and \eqref{ENCpqKpNq}}
\end{align*}
\begin{align*}
v_\Sigma(C\dto D)=\mathbf{0}
                         &\text{ iff } v_\Sigma(C)\neq\mathbf{0} \text{ and } v_\Sigma(D)=\mathbf{0} & \text{by the definition of $v_\Sigma$} \\
                         &\text{ iff } \Sigma\vdash C \text{ and } \Sigma\not\vdash D & \text{by IH} \\
                         &\text{ iff } \Sigma\not\vdash (C\dto D) & \text{by Lemma \ref{lem:to-in-pdc}}
\end{align*}

\noindent {\bf Case 3.} If $B=C {\land} D$, then there are the following three cases.
\begin{align*}
v_\Sigma(C\land D)=\mathbf{1}
                         &\text{ iff } v_\Sigma(C)=\mathbf{1} \text{ and } v_\Sigma(D)\neq\mathbf{0} & \text{by the definition of $v_\Sigma$} \\
                         &\text{ iff } \Sigma\vdash D \text{ and } \Sigma\not\vdash \Not C & \text{by IH } \\
                         &\text{ iff } \Sigma\not\vdash D\dto \Not C & \text{by Lemma \ref{lem:to-in-pdc}} \\
                         &\text{ iff } \Sigma\not\vdash \Not (C\land D) & \text{by \eqref{ENKpqCqNp}}
\end{align*}
\begin{align*}
v_\Sigma(C{\land} D)=\mathbf{i}
                         &\text{ iff } v_\Sigma(C)=\mathbf{i} \text{ and } v_\Sigma(D)\neq\mathbf{0} & \text{by the definition of $v_\Sigma$} \\
                         &\text{ iff } (\Sigma\vdash C \text{ and } \Sigma\vdash \Not C) \text{ and } \Sigma\vdash D & \text{by IH} \\
                         &\text{ iff } (\Sigma\vdash C \text{ and } \Sigma\vdash D)\text{ and } (\Sigma\not\vdash D \text{ or } \Sigma\vdash \Not C)& \text{by simple calculation} \\
                         &\text{ iff } (\Sigma\vdash C \text{ and } \Sigma\vdash D)\text{ and } \Sigma\vdash D\dto \Not C& \text{by Lemma \ref{lem:to-in-pdc}} \\
                         &\text{ iff } \Sigma\vdash (C\land D) \text{ and }\Sigma\vdash \Not (C\land D) & \text{$\Sigma$ is a theory and by \eqref{ENCpqKpNq}} 
\end{align*}
\begin{align*}
v_\Sigma(C\land D)=\mathbf{0}
                         &\text{ iff } v_\Sigma(C)=\mathbf{0} \text{ or } v_\Sigma(D)=\mathbf{0} & \text{by the definition of $v_\Sigma$} \\
                         &\text{ iff } \Sigma\not\vdash C \text{ or } \Sigma\not\vdash D & \text{by IH} \\
                         &\text{ iff } \Sigma\not\vdash (C\land D) & \text{$\Sigma$ is a theory} 
\end{align*}

This completes the proof.
\end{proof}

\begin{theorem}[Completeness] \label{thm:completenesswrt3}
For all $\Gamma\cup \{ A \}\subseteq \mathsf{Form}_r^-$, if $\Gamma\models_3 A$ then $\Gamma \vdash A$.
\end{theorem}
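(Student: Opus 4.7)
The plan is to prove the completeness theorem by contraposition, exploiting the canonical-model machinery already assembled. Suppose $\Gamma \not\vdash A$. Then by Lemma \ref{lem:Lindenbaum} (Schumm's Lindenbaum lemma), there exists $\Pi \supseteq \Gamma$ that is maximally non-trivial with $A \notin \Pi$. Since $\Pi$ is closed under $\vdash$ (by the remark following the definition of maximal non-triviality), we have $\Pi \not\vdash A$ and $\Pi \vdash B$ for every $B \in \Gamma$.

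Next I would consider the canonical valuation $v_\Pi \colon \mathsf{Prop} \longrightarrow \{\mathbf{1}, \mathbf{i}, \mathbf{0}\}$, extended to all formulas via the three-valued truth tables. By Lemma \ref{lem:canonical-valuation}, the equivalences
\[
v_\Pi(B) = \mathbf{1} \text{ iff } \Pi \not\vdash \Not B, \quad v_\Pi(B) = \mathbf{i} \text{ iff } \Pi \vdash B \text{ and } \Pi \vdash \Not B, \quad v_\Pi(B) = \mathbf{0} \text{ iff } \Pi \not\vdash B
\]
hold for every $B \in \mathsf{Form}_r^-$. In particular, $v_\Pi(B) \in \mathcal{D}_3 = \{\mathbf{1}, \mathbf{i}\}$ if and only if $\Pi \vdash B$.

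From this it follows immediately that $v_\Pi(B) \in \mathcal{D}_3$ for every $B \in \Gamma$, while $v_\Pi(A) = \mathbf{0} \notin \mathcal{D}_3$. Hence $v_\Pi$ is a three-valued \textbf{D$_2^-$}-interpretation witnessing $\Gamma \not\models_3 A$, and the contrapositive is established.

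Since the substantive work has been done in Lemmas \ref{lem:Lindenbaum} and \ref{lem:canonical-valuation}, the only point that might require additional care is the tacit use of the fact that $\Pi$ decides every formula in the sense that $\Pi \not\vdash B$ implies $\Pi \vdash \Not B$ (which underwrites the well-definedness of $v_\Pi$ invoked in Lemma \ref{lem:canonical-valuation}). This is ensured by maximality together with axiom \eqref{CCNppp}: if $\Pi \not\vdash B$ then $\Pi \cup \{B\}$ is trivial, so by the deduction theorem $\Pi \vdash B \dto \Not B$, and an instance of \eqref{CCNppp} together with \eqref{ENNpp} yields $(B \dto \Not B) \dto \Not B$, whence $\Pi \vdash \Not B$ by (MP). I expect this to be the only detail worth double-checking; everything else is a direct appeal to the preceding lemmas.
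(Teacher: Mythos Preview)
Your proof is correct and follows essentially the same contrapositive route as the paper: extend $\Gamma$ to a maximally non-trivial $\Pi$ via Lemma~\ref{lem:Lindenbaum}, invoke Lemma~\ref{lem:canonical-valuation} to read off that $v_\Pi(B)\in\mathcal{D}_3$ iff $\Pi\vdash B$, and conclude $\Gamma\not\models_3 A$. The extra paragraph you supply on well-definedness (deriving $(B\dto\Not B)\dto\Not B$ from \eqref{CCNppp} and \eqref{ENNpp}) is something the paper simply declares ``obvious'' at the start of the proof of Lemma~\ref{lem:canonical-valuation}, so you are only making explicit what the paper leaves implicit.
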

\begin{proof}
Assume $\Gamma\not\vdash A$. Then, by Lemma~\ref{lem:Lindenbaum}, there is a $\Pi\supseteq \Gamma$ such that $\Pi$ is maximally non-trivial and $A\not\in\Pi$, and by Lemma~\ref{lem:canonical-valuation}, a three-valued {\bf D$_2^-$}-valuation $v_{\Pi}$ can be defined with $I_{\Pi}(B)\in\mathcal{D}_3$ for every $B\in\Gamma$ and $I_{\Pi}(A)\not\in\mathcal{D}_3$. Thus it follows that $\Gamma \not\models_3 A$, as desired.
\end{proof}

\section{The main result}
By making use of the result in the previous section, I prove the main result of this article. To this end,  I need one more lemma. 
\begin{lemma}\label{lem:4implies3}
For all $\Gamma\cup \{ A \}\subseteq \mathsf{Form}_r^-$, if $\Gamma\models_4 A$ then $\Gamma\models_3 A$.
\end{lemma}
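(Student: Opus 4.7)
My plan is to prove the contrapositive: from a three-valued countermodel I will manufacture a four-valued countermodel, using the observation implicit in the passage preceding Definition 2.6 that the three-valued semantics is obtained from the four-valued one by ``merging'' the two intermediate values.

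Concretely, I will introduce the map $h:\{\mathbf{1},\mathbf{i},\mathbf{j},\mathbf{0}\}\to\{\mathbf{1},\mathbf{i},\mathbf{0}\}$ defined by $h(\mathbf{j})=\mathbf{i}$ and $h(x)=x$ otherwise, and show that $h$ is a homomorphism from the four-valued truth-functions of Definition \ref{def:four-valuedD2-} onto the three-valued ones of Definition 2.6. This is a routine case-analysis on the truth tables. The nontrivial inputs are precisely those involving $\mathbf{j}$; for instance $h(\Not_4 \mathbf{j})=h(\mathbf{i})=\mathbf{i}=\Not_3 \mathbf{i}=\Not_3 h(\mathbf{j})$, $h(\mathbf{j}\rdland_4\mathbf{j})=h(\mathbf{j})=\mathbf{i}=\mathbf{i}\rdland_3\mathbf{i}=h(\mathbf{j})\rdland_3 h(\mathbf{j})$, and $h(\mathbf{j}\dto_4\mathbf{i})=h(\mathbf{i})=\mathbf{i}=\mathbf{i}\dto_3\mathbf{i}=h(\mathbf{j})\dto_3 h(\mathbf{i})$. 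The remaining cases either do not involve $\mathbf{j}$ (so $h$ acts as the identity) or follow by the same pattern. Critically, $h$ also preserves designation: it sends $\{\mathbf{1},\mathbf{i},\mathbf{j}\}$ into $\{\mathbf{1},\mathbf{i}\}$ and $\mathbf{0}$ to $\mathbf{0}$.

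With $h$ in hand, the proof runs as follows. Assume $\Gamma\not\models_3 A$ and fix a three-valued interpretation $v_3$ with $I_3(B)\in\mathcal{D}_3$ for every $B\in\Gamma$ and $I_3(A)\notin\mathcal{D}_3$. Define a four-valued interpretation $v_4$ by setting $v_4(p):=v_3(p)$ for each $p\in\mathsf{Prop}$; this is legitimate since $\{\mathbf{1},\mathbf{i},\mathbf{0}\}\subseteq\{\mathbf{1},\mathbf{i},\mathbf{j},\mathbf{0}\}$, and on these three values $h$ is the identity, so $h(v_4(p))=v_3(p)$. A routine induction on the construction of a formula $B$, invoking the homomorphism property of $h$ at each inductive step (one case per connective), then yields $h(I_4(B))=I_3(B)$ for all $B\in\mathsf{Form}_r^-$. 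Since $h$ preserves designation, $I_4(B)\in\mathcal{D}_4$ iff $I_3(B)\in\mathcal{D}_3$, whence $v_4$ designates every member of $\Gamma$ and undesignates $A$. This witnesses $\Gamma\not\models_4 A$, completing the contrapositive.

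The only delicate point is the case-check that establishes $h$ is a homomorphism; once that is in place the remainder of the argument is an entirely mechanical induction. No deduction-theoretic or axiomatic machinery is needed, and in particular the lemma does not rely on any four-valued completeness result (which is not available from the earlier sections).
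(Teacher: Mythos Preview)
Your proof is correct and follows essentially the same strategy as the paper: both argue by contrapositive, define the four-valued interpretation by $v_4(p):=v_3(p)$, and establish an inductive invariant linking $I_4$ and $I_3$ that yields agreement on (un)designation. Your packaging via the matrix homomorphism $h$ is a slightly more algebraic formulation of the paper's invariant ``$I_1(B)=\mathbf{1}$ iff $I_0(B)=\mathbf{1}$, and $I_1(B)=\mathbf{0}$ iff $I_0(B)=\mathbf{0}$,'' but the underlying argument is the same.
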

\begin{proof}
Suppose $\Gamma\not\models_3 A$. Then there is a three-valued {\bf D$_2^-$}-interpretation $v_0$ such that $I_0(B)\in \mathcal{D}_3$ for all $B\in \Gamma$ and $I_0(A)\not\in\mathcal{D}_3$. Now, let $v_1$ be a four-valued {\bf D$_2^-$}-interpretation such that $v_1(p)=v_0(p)$. Then, it holds that $I_1(A)=\mathbf{1}$ iff $I_0(A)=\mathbf{1}$ and $I_1(A)=\mathbf{0}$ iff $I_0(A)=\mathbf{0}$. This can be proved by a simple induction on the complexity of $A$. 
\begin{itemize}
\setlength{\parskip}{0cm}
\setlength{\itemsep}{0cm}
\item The base case when $A\in \mathsf{Prop}$ is obvious by definition.
\item For induction step, consider the following two cases.

\begin{itemize}
\setlength{\parskip}{0cm}
\setlength{\itemsep}{0cm}
\item If $A$ is of the form $\Not B$, then by IH,  

 \begin{itemize}
  \setlength{\parskip}{0cm}
  \setlength{\itemsep}{0cm}
\item $I_1 (B)=\mathbf{1}$ iff $I_0(B)=\mathbf{1}$ and 
  \item $I_1 (B)=\mathbf{0}$ iff $I_0(B)=\mathbf{0}$.
 \end{itemize} 
Then, by the truth table, it follows that $I_1 (\Not B){=}\mathbf{0}$ iff (by the truth table) $I_1(B){=}\mathbf{1}$ iff (by IH)  $I_0(B){=}\mathbf{1}$ iff (by the truth table) $I_0(\Not B){=}\mathbf{0}$. Moreover, $I_1 (\Not B){=}\mathbf{1}$ iff (by the truth table) $I_1(B){=}\mathbf{0}$ iff (by IH) $I_0(B){=}\mathbf{0}$ iff (by the truth table) $I_0(\Not B){=}\mathbf{1}$.
\item If $A$ is of the form $B\dto C$, then by IH,
 
 \begin{itemize}
\setlength{\parskip}{0cm}
\setlength{\itemsep}{0cm}
  \item $I_1 (B)=\mathbf{1}$ iff $I_0(B)=\mathbf{1}$, $I_1 (B)=\mathbf{0}$ iff $I_0(B)=\mathbf{0}$, and
  \item $I_1 (C)=\mathbf{1}$ iff $I_0(C)=\mathbf{1}$, $I_1 (C)=\mathbf{0}$ iff $I_0(C)=\mathbf{0}$. 
 \end{itemize}
Then, by the truth table, it follows that $I_1 (B\dto C){=}\mathbf{0}$ iff (by the truth table) $I_1(B){\neq}\mathbf{0}$ and $I_1(C){=}\mathbf{0}$ iff (by IH) $I_0(B)\neq\mathbf{0}$ and $I_0(C){=}\mathbf{0}$ iff (by the truth table) $I_0(B\dto C){=}\mathbf{0}$. Moreover, $I_1 (B\dto C){=}\mathbf{1}$ iff (by the truth table) $I_1(B){=}\mathbf{0}$ or $I_1(C){=}\mathbf{1}$ iff (by IH) $I_0(B){=}\mathbf{0}$ and $I_0(C){=}\mathbf{1}$ iff (by the truth table) $I_0(B\dto C){=}\mathbf{1}$.
\end{itemize}
The case for conjunction is similar to the case for $\to_d$. This completes the proof. 
\end{itemize} 
Once this is established it is easy to see that the desired result holds since $I_1(A)=\mathbf{0}$ iff $I_0(A)=\mathbf{0}$ is equivalent to $I_1(A)\not\in\mathcal{D}_4$ iff $I_0(A)\not\in\mathcal{D}_3$.
\end{proof}

I am now ready to prove the main result. 
 
\begin{theorem}[Main Theorem]\label{thm:3iffK}
For all $\Gamma\cup \{ A \}\subseteq \mathsf{Form}_r^-$, $\Gamma \models_3 A$ iff $\Gamma\models_d A$.
\end{theorem}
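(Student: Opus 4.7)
The plan is to prove the two directions separately. The forward direction $\Gamma \models_d A \Rightarrow \Gamma \models_3 A$ is essentially immediate from what is already in place. By Remark~\ref{rem:special-case}, a four-valued {\bf D$_2^-$}-interpretation is nothing but a {\bf D$_2^-$}-model whose carrier $W = \{w_1, w_2\}$ has exactly two worlds, with $v(B) \in \mathcal{D}_4$ corresponding precisely to ``$B$ is true at some world''. Hence any Kripke counter-model with $|W|=2$ is a four-valued counter-model, so $\Gamma \models_d A$ implies $\Gamma \models_4 A$, and Lemma~\ref{lem:4implies3} then yields $\Gamma \models_3 A$.

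For the converse $\Gamma \models_3 A \Rightarrow \Gamma \models_d A$, I would argue contrapositively by collapsing an arbitrary {\bf D$_2^-$}-model to a three-valued interpretation. Given $\langle W, v\rangle$ witnessing $\Gamma \not\models_d A$, so that $I(x_B, B) = 1$ for each $B \in \Gamma$ and some $x_B \in W$, while $I(y, A) = 0$ for every $y \in W$, define $v^* : \mathsf{Prop} \to \{\mathbf{1}, \mathbf{i}, \mathbf{0}\}$ by setting $v^*(p) = \mathbf{1}$ if $I(w, p) = 1$ at every $w \in W$, $v^*(p) = \mathbf{0}$ if $I(w, p) = 0$ at every $w \in W$, and $v^*(p) = \mathbf{i}$ otherwise. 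The key is then to establish by induction on $B$ that the three-valued extension $I^*$ satisfies $I^*(B) = \mathbf{1}$ iff $I(w, B) = 1$ for every $w$, $I^*(B) = \mathbf{0}$ iff $I(w, B) = 0$ for every $w$, and $I^*(B) = \mathbf{i}$ otherwise. Equivalently, $I^*(B) \in \mathcal{D}_3$ iff some world in $W$ verifies $B$, which immediately produces a three-valued witness to $\Gamma \not\models_3 A$.

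The bulk of the work lies in this induction, and the non-trivial cases are $\rdland$ and $\dto$, whose Kripke clauses carry global quantifiers over $W$. For $\dto$: if $v^*(B) = \mathbf{0}$ then $I(x, B) = 0$ for all $x$, so $B \dto C$ holds at every world and $v^*(B \dto C) = \mathbf{1}$, matching the bottom row of the conditional table; otherwise some $x$ satisfies $B$, so $I(w, B \dto C) = I(w, C)$ and the collapse of $B \dto C$ coincides with that of $C$, which is exactly what the top two rows record. The case for $\rdland$ is analogous: a $\mathbf{0}$-valued conjunct kills the conjunction, and if the right conjunct has a witness then $A \rdland B$ agrees with $A$ at every world, reproducing the $\rdland$ table. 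I do not expect any real obstacle here, since the three-valued truth tables for $\rdland$ and $\dto$ were engineered precisely to mirror the ``some world'' versus ``every world'' distinction, so the induction reduces to a routine, if slightly tedious, case analysis.
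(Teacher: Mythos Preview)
Your proof is correct, but the direction $\Gamma \models_3 A \Rightarrow \Gamma \models_d A$ takes a genuinely different route from the paper. The paper argues via the proof system: from $\Gamma \models_3 A$ one obtains $\Gamma \vdash A$ by the completeness theorem (Theorem~\ref{thm:completenesswrt3}), and then verifies directly that the Hilbert system is sound for the Kripke semantics, giving $\Gamma \models_d A$. Your argument is purely semantic: you collapse an arbitrary {\bf D$_2^-$}-model to a three-valued interpretation and check by induction that the collapse preserves ``true at some world'' as membership in $\mathcal{D}_3$. The induction you sketch does go through exactly as you say; the global quantifier in each of the $\rdland$ and $\dto$ clauses makes the truth value at every world depend only on the local value of one argument once the other argument's ``some world'' status is fixed, and this is precisely what the three-valued tables encode. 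Your approach is more elementary in that it establishes the semantic equivalence without any appeal to the axiomatization, and it would stand even if Definition~\ref{def:D2-} were absent. The paper's detour, on the other hand, leverages work already done and yields the soundness of {\bf D$_2^-$} with respect to $\models_d$ as an explicit intermediate step, which fits its goal of exhibiting a simple axiomatization of the fragment. For the other direction, $\Gamma \models_d A \Rightarrow \Gamma \models_3 A$, your argument coincides with the paper's.
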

\begin{proof}
For the left-to-right direction, if $\Gamma\models_3 A$ then $\Gamma\vdash A$ by Theorem~\ref{thm:completenesswrt3}. One may then check that if $\Gamma\vdash A$ then $\Gamma\models_d A$. This is tedious but not difficult. 
For the other direction, if $\Gamma\models_d A$ then it immediately implies that $\Gamma\models_4 A$, by recalling Remark~\ref{rem:special-case}. Thus, together with Lemma~\ref{lem:4implies3}, the desired result is proved.
\end{proof}

As a corollary of Proposition~\ref{prop:soundness} and Theorems~\ref{thm:completenesswrt3} and \ref{thm:3iffK}, the following result is obtained.
\begin{corollary}
For all $\Gamma\cup \{ A \}\subseteq \mathsf{Form}_r^-$, $\Gamma\vdash A$ iff $\Gamma\models_d A$.
\end{corollary}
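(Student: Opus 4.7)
The plan is to combine the three already-established results in a two-step chain of biconditionals. First, Proposition~\ref{prop:soundness} gives the soundness direction $\Gamma \vdash A \Rightarrow \Gamma \models_3 A$, and Theorem~\ref{thm:completenesswrt3} gives the converse completeness direction $\Gamma \models_3 A \Rightarrow \Gamma \vdash A$. Together these yield $\Gamma \vdash A$ iff $\Gamma \models_3 A$, uniformly for all $\Gamma \cup \{A\} \subseteq \mathsf{Form}_r^-$.

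Next, I would invoke the Main Theorem (Theorem~\ref{thm:3iffK}), which asserts $\Gamma \models_3 A$ iff $\Gamma \models_d A$ on exactly the same class of formulas. Chaining these biconditionals gives $\Gamma \vdash A$ iff $\Gamma \models_d A$, which is precisely the statement of the corollary.

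There is no real obstacle here: all the substantive work has been done in proving the three cited results, so the corollary is a one-line composition. The only thing to verify is that the universal quantification over $\Gamma \cup \{A\} \subseteq \mathsf{Form}_r^-$ matches across the three prior statements, which it does by direct inspection. Conceptually, the payoff is that the Hilbert-style calculus of Definition~\ref{def:D2-} is sound and complete with respect to Ja\'skowski's original Kripke-style semantics restricted to the disjunction-free fragment, via the detour through the three-valued semantics $\models_3$.
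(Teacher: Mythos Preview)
Your proposal is correct and mirrors the paper's own treatment exactly: the corollary is stated as an immediate consequence of Proposition~\ref{prop:soundness}, Theorem~\ref{thm:completenesswrt3}, and Theorem~\ref{thm:3iffK}, with no additional argument given. Your chaining of the two biconditionals is precisely the intended one-line derivation.
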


\section{Reflections}

\subsection{The language $\mathcal{L}_l$}\label{subsec:left-discussive-conjunction}
In the later works related to discussive logics, the language $\mathcal{L}_l$ has been also studied intensively. Here, I note that the above observations carry over to $\mathcal{L}_l^-$.

\begin{itemize}
\setlength{\parskip}{0cm}
\setlength{\itemsep}{0cm}
\item First, the truth condition for the left discussive conjunction within the Kripke semantics is as follows.

\begin{center}
\begin{tabular}{rl}
($\land_d^l$)& $v(w, A \land_d^l B)=1$  iff  $ \text{for some } x\in W (v(x, A)=1) \text{ and } v(w, B)=1$.
\end{tabular}
\end{center}
\item Second, the three- and four-valued truth tables for the left discussive conjunction are as follows. Of course, the four-valued truth table is obtained by considering the special case of the Kripke semantics (recall Remark~\ref{rem:special-case}), and the three-valued truth table is obtained by ``merging'' the intermediate values in the four-valued truth table.
\begin{displaymath}
{\small
\begin{tabular}{c|cccc}
$A \land_d^l B$ & $\mathbf{1}$ & $\mathbf{i}$ &  $\mathbf{0}$ \\ \hline
$\mathbf{1}$ & $\mathbf{1}$ & $\mathbf{i}$ & $\mathbf{0}$\\
$\mathbf{i}$ & $\mathbf{1}$ & $\mathbf{i}$ & $\mathbf{0}$\\
$\mathbf{0}$ & $\mathbf{0}$ & $\mathbf{0}$ &  $\mathbf{0}$\\
\end{tabular}
\quad
\begin{tabular}{c|cccc}
$A \land_d^l B$ & $\mathbf{1}$ & $\mathbf{i}$ & $\mathbf{j}$ & $\mathbf{0}$ \\ \hline
$\mathbf{1}$ & $\mathbf{1}$ & $\mathbf{i}$ & $\mathbf{j}$ & $\mathbf{0}$\\
$\mathbf{i}$ & $\mathbf{1}$ & $\mathbf{i}$ & $\mathbf{j}$ & $\mathbf{0}$\\
$\mathbf{j}$ & $\mathbf{1}$ & $\mathbf{i}$ & $\mathbf{j}$ & $\mathbf{0}$\\
$\mathbf{0}$ & $\mathbf{0}$ & $\mathbf{0}$ & $\mathbf{0}$ & $\mathbf{0}$\\
\end{tabular}
}
\end{displaymath}
\item Third, for the proof system, \eqref{ENKpqCqNp} is replaced by the following.
\[ \Not (A \land_d^l B) \dequiv (A\dto \Not B) \tag{Ax9'} \label{ENKpqCpNq} \] 
\end{itemize}
Based on these, the equivalence of the discussive semantics and the three-valued semantics may be established in a similar manner. 
For those who are interested in the details, note that for the purpose of establishing the result corresponding to Theorem~\ref{thm:3iffK}, it suffices to check the following three items.
\begin{itemize}
\setlength{\parskip}{0cm}
\setlength{\itemsep}{0cm}
\item Lemma~\ref{lem:canonical-valuation}, for the completeness result, i.e. if $\Gamma\models_3 A$ then $\Gamma\vdash A$.
\item $\Gamma\vdash A$ then $\Gamma\models_d A$.
\item Lemma~\ref{lem:4implies3}, i.e. $\Gamma\models_4 A$ then $\Gamma\models_3 A$.
\end{itemize}
For the first item, it is enough to check the case related to conjunction, in particular the following two cases.
\begin{align*}
v_\Sigma(C\land D)=\mathbf{1}
                         &\text{ iff } v_\Sigma(C)\neq\mathbf{0} \text{ and } v_\Sigma(D)=\mathbf{1} & \text{by the definition of $v_\Sigma$} \\
                         &\text{ iff } \Sigma\vdash C \text{ and } \Sigma\not\vdash \Not D & \text{by IH } \\
                         &\text{ iff } \Sigma\not\vdash C\dto \Not D & \text{by Lemma \ref{lem:to-in-pdc}} \\
                         &\text{ iff } \Sigma\not\vdash \Not (C\land D) & \text{by \eqref{ENKpqCpNq}}
\end{align*}
\begin{align*}
v_\Sigma(C{\land} D)=\mathbf{i}
                         &\text{ iff } v_\Sigma(C)\neq\mathbf{0} \text{ and } v_\Sigma(D)=\mathbf{i} & \text{by the definition of $v_\Sigma$} \\
                         &\text{ iff } \Sigma\vdash C  \text{ and } (\Sigma\vdash D \text{ and } \Sigma\vdash \Not D) & \text{by IH} \\
                         &\text{ iff } (\Sigma\vdash C \text{ and } \Sigma\vdash D)\text{ and } (\Sigma\not\vdash C \text{ or } \Sigma\vdash \Not D)& \text{by simple calculation} \\
                         &\text{ iff } (\Sigma\vdash C \text{ and } \Sigma\vdash D)\text{ and } \Sigma\vdash C\dto \Not D& \text{by Lemma \ref{lem:to-in-pdc}} \\
                         &\text{ iff } \Sigma\vdash (C\land D) \text{ and }\Sigma\vdash \Not (C\land D) & \text{$\Sigma$ is a theory and by \eqref{ENKpqCpNq}} 
\end{align*}

For the second item, this is immediate in view of the new truth condition for the left discussive conjunction within the Kripke semantics. 

Finally, for the third item, it is again enough to check the case for conjunction, and the proof runs as follows. If $A$ is of the form $B\land_d^l C$, then by IH,
 
\begin{itemize}
\setlength{\parskip}{0cm}
\setlength{\itemsep}{0cm}
\item $I_1 (B)=\mathbf{1}$ iff $I_0(B)=\mathbf{1}$, $I_1 (B)=\mathbf{0}$ iff $I_0(B)=\mathbf{0}$, and
\item $I_1 (C)=\mathbf{1}$ iff $I_0(C)=\mathbf{1}$, $I_1 (C)=\mathbf{0}$ iff $I_0(C)=\mathbf{0}$. 
\end{itemize}
Then, by the truth table, it follows that $I_1 (B\land_d^l C){=}\mathbf{0}$ iff (by the truth table) $I_1(B){=}\mathbf{0}$ or $I_1(C){=}\mathbf{0}$ iff (by IH) $I_0(B){=}\mathbf{0}$ and $I_0(C){=}\mathbf{0}$ iff (by the truth table) $I_0(B\land_d^l C){=}\mathbf{0}$. Moreover, $I_1 (B\land_d^l C){=}\mathbf{1}$ iff (by the truth table) $I_1(B){\neq}\mathbf{0}$ or $I_1(C){=}\mathbf{1}$ iff (by IH) $I_0(B){\neq}\mathbf{0}$ and $I_0(C){=}\mathbf{1}$ iff (by the truth table) $I_0(B\land_d^l C){=}\mathbf{1}$.

Based on these, the proof of Theorem~\ref{thm:3iffK} can be repeated to establish the desired result.

\subsection{The language $\mathcal{L}$}\label{subsec:non-discussive-conjunction}
If one considers the very first discussive language $\mathcal{L}$ in which the only discussive connective is conditional, a similar result is obtained by considering the negation-conditional fragment. More specifically, the concerned fragment is equivalent to the three-valued semantics induced by the following truth tables:
\begin{displaymath}
{\small
\begin{tabular}{c|c}
$A$  &  $\Not A$\\
\hline
$\mathbf{1}$ &  $\mathbf{0}$\\
$\mathbf{i}$ & $\mathbf{i}$\\
$\mathbf{0}$  &  $\mathbf{1}$\\
\end{tabular}
\quad
\begin{tabular}{c|cccc}
$A \dto B$ & $\mathbf{1}$ & $\mathbf{i}$  & $\mathbf{0}$ \\ \hline
$\mathbf{1}$ & $\mathbf{1}$ & $\mathbf{i}$ & $\mathbf{0}$\\
$\mathbf{i}$ & $\mathbf{1}$ & $\mathbf{i}$ & $\mathbf{0}$\\
$\mathbf{0}$ & $\mathbf{1}$ & $\mathbf{1}$ & $\mathbf{1}$
\end{tabular}
}
\end{displaymath}
This can be confirmed by carefully removing the cases for conjunction in the proof of the main result. For those who are interested in the details, note once again that for the purpose of establishing the result corresponding to Theorem~\ref{thm:3iffK}, it suffices to check the following three items.
\begin{itemize}
\setlength{\parskip}{0cm}
\setlength{\itemsep}{0cm}
\item Lemma~\ref{lem:canonical-valuation}, for the completeness result, i.e. if $\Gamma\models_3 A$ then $\Gamma\vdash A$.
\item $\Gamma\vdash A$ then $\Gamma\models_d A$.
\item Lemma~\ref{lem:4implies3}, i.e. $\Gamma\models_4 A$ then $\Gamma\models_3 A$.
\end{itemize}
In particular, it is enough to check that the previous proofs are not essentially relying on conjunction. For the first item, note first that \eqref{ENCpqKpNq} needs to be replaced by the following three axioms.
\begin{align*}
& \Not (A\dto B) \dto A \tag{Ax10.1} \label{ENCpqKpNq1}\\
& \Not (A\dto B) \dto \Not B \tag{Ax10.2} \label{ENCpqKpNq2}\\
& A\dto (\Not B\dto \Not (A\dto B)) \tag{Ax10.3} \label{ENCpqKpNq3}
\end{align*}
Then, it suffices to check that if $\Sigma$ is maximally non-trivial, then $\Sigma\vdash \Not (A\dto B)$ iff ($\Sigma\vdash A$ and $\Sigma\vdash \Not B$). This of course holds even without the maximal non-triviality. For the second and the third items, there is nothing to be checked since they are both already established.

Based on these, the proof of Theorem~\ref{thm:3iffK} can be repeated to establish the desired result.
 
\subsection{Discussive negation}\label{subsec:discussive-negation}
Another variation of the main result is obtained by considering a discussive interpretation of negation, a suggestion made by Jerzy Perzanowski as one of the comments of the translator in \cite[p.59]{J2}, and explored further by Ciuciura in \cite{C1}.\footnote{There is, unfortunately, a problem with one of the main results in \cite{C1}. See the appendix of \cite{OAD2} for the details.} Here, once again, I note that the above observations carry over to this variant. 

\begin{itemize}
\setlength{\parskip}{0cm}
\setlength{\itemsep}{0cm}
\item First, the truth condition for the discussive negation within the Kripke semantics is as follows.

\begin{center}
\begin{tabular}{rl}
($\Not_d$)& $v(w, \Not_d A)=1$  iff $ \text{for some } x\in W, v(x, A)=0$.
\end{tabular}
\end{center}
\item Second, the three- and four-valued truth tables for the discussive negation are as follows.
\begin{displaymath}
{\small
\begin{tabular}{c|c}
$A$  &  $\Not_d A$\\ \hline
$\mathbf{1}$ &  $\mathbf{0}$\\
$\mathbf{i}$ & $\mathbf{1}$\\
$\mathbf{0}$ &  $\mathbf{1}$\\
\end{tabular}
\quad
\begin{tabular}{c|c}
$A$  &  $\Not_d A$\\ \hline
$\mathbf{1}$ &  $\mathbf{0}$\\
$\mathbf{i}$ & $\mathbf{1}$\\
$\mathbf{j}$ & $\mathbf{1}$\\
$\mathbf{0}$ &  $\mathbf{1}$\\
\end{tabular}
}
\end{displaymath}
\item Third, for the proof system, \eqref{ENNpp} is replaced by the following.
\[ \Not_d A \dto (\Not_d\Not_d A\dto  B) \] 
\end{itemize}
Based on these, the equivalence of the discussive semantics and the three-valued semantics is established in a similar manner. I first note here that given the proof system, the following is obtained. 
\begin{lemma}\label{lem:4Ciuciura}
If $\Sigma$ is maximally non-trivial, then $\Sigma\vdash \Not_d\Not_d A$ iff $\Sigma\not\vdash \Not_d A$.
\end{lemma}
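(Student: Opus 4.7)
The plan is to handle the two directions separately, relying on Lemma~\ref{lem:to-in-pdc} and the axioms of the system. The key ingredients are the replacement axiom $\Not_d A \dto (\Not_d\Not_d A \dto B)$, which encodes the joint inconsistency of $\Not_d A$ and $\Not_d\Not_d A$, together with the (Ax7)-instance $(\Not_d\Not_d A \dto \Not_d A)\dto \Not_d A$, which (as a quick three-valued check confirms) is still valid when $\Not$ is read as $\Not_d$ and so remains among the axioms of the discussive-negation variant.

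For the forward direction I would argue contrapositively. Suppose both $\Sigma\vdash \Not_d A$ and $\Sigma\vdash \Not_d\Not_d A$. Then, choosing any $B\in\mathsf{Form}_r^-$, two successive applications of (MP) to the instance $\Sigma\vdash \Not_d A \dto(\Not_d\Not_d A \dto B)$ of the replacement axiom give $\Sigma\vdash B$; since $B$ is arbitrary, this contradicts the non-triviality of $\Sigma$. Hence $\Sigma\vdash \Not_d\Not_d A$ implies $\Sigma\not\vdash \Not_d A$.

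For the backward direction, suppose $\Sigma\not\vdash \Not_d A$. The (Ax7)-instance gives $\Sigma\vdash (\Not_d\Not_d A \dto \Not_d A)\dto \Not_d A$, so Lemma~\ref{lem:to-in-pdc} yields $\Sigma\not\vdash \Not_d\Not_d A\dto \Not_d A$ (the other alternative, $\Sigma\vdash \Not_d A$, is excluded by assumption). A second application of Lemma~\ref{lem:to-in-pdc} to this non-derivability then forces $\Sigma\vdash \Not_d\Not_d A$, as required. The only real conceptual step is noticing that the (Ax7)-instance at $\Not_d A$, combined with Lemma~\ref{lem:to-in-pdc}, supplies an excluded-middle-in-provability principle ($\Sigma\vdash A$ or $\Sigma\vdash \Not_d A$) for maximally non-trivial $\Sigma$; beyond this, no obstacle is anticipated, since the rest is routine bookkeeping with Lemma~\ref{lem:to-in-pdc}.
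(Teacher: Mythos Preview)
Your proof is correct. The paper states this lemma without proof, merely noting that it ``is obtained'' from the proof system, so there is no detailed argument to compare against. Your approach---using the replacement axiom $\Not_d A \dto (\Not_d\Not_d A\dto B)$ for the forward direction, and combining the (Ax7)-instance $(\Not_d\Not_d A \dto \Not_d A)\dto \Not_d A$ with two applications of Lemma~\ref{lem:to-in-pdc} for the backward direction---is exactly the natural way to fill in the details. The ingredients you rely on are all available in the discussive-negation variant: (Ax7) survives the passage from $\Not$ to $\Not_d$ (as you verify semantically, and as the paper's stipulation that only \eqref{ENNpp} is replaced confirms), and Lemma~\ref{lem:to-in-pdc} depends only on (Ax1), (Ax2), and maximal non-triviality, not on any negation axioms.
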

Then, for the purpose of establishing the result corresponding to Theorem~\ref{thm:3iffK}, it suffices to check the following three items.
\begin{itemize}
\setlength{\parskip}{0cm}
\setlength{\itemsep}{0cm}
\item Lemma~\ref{lem:canonical-valuation}, for the completeness result, i.e. if $\Gamma\models_3 A$ then $\Gamma\vdash A$.
\item $\Gamma\vdash A$ then $\Gamma\models_d A$.
\item Lemma~\ref{lem:4implies3}, i.e. $\Gamma\models_4 A$ then $\Gamma\models_3 A$.
\end{itemize}
For the first item, it is enough to check the case related to negation, in particular the following case since negated formula \emph{never} takes the value $\mathbf{i}$, and the case when negated formula takes the value $\mathbf{0}$ is already covered by the original Lemma~\ref{lem:canonical-valuation}.
\begin{align*}
v_\Sigma(\Not C)=\mathbf{1} 
                         &\text{ iff } v_\Sigma(C)\neq\mathbf{1} & \text{by the definition of $v_\Sigma$} \\
                         &\text{ iff } \Sigma\vdash \Not C & \text{by IH} \\
                         &\text{ iff } \Sigma\not\vdash \Not \Not C & \text{by Lemma~\ref{lem:4Ciuciura}} 
\end{align*}

For the second item, this is immediate in view of the new truth condition for the discussive negation within the Kripke semantics. 

Finally, for the third item, it is sufficient to check the case for negation, and the proof runs as follows. If $A$ is of the form $\Not B$, then by IH,  
 \begin{itemize}
 \setlength{\parskip}{0cm}
\setlength{\itemsep}{0cm}
 \item $I_1 (B)=\mathbf{1}$ iff $I_0(B)=\mathbf{1}$ and 
  \item $I_1 (B)=\mathbf{0}$ iff $I_0(B)=\mathbf{0}$.
 \end{itemize} 
\smallskip
Then, by the truth table, it follows that $I_1 (\Not B){=}\mathbf{0}$ iff (by the truth table) $I_1(B){=}\mathbf{1}$ iff (by IH)  $I_0(B){=}\mathbf{1}$ iff (by the truth table) $I_0(\Not B){=}\mathbf{0}$. Moreover, $I_1 (\Not B){=}\mathbf{1}$ iff (by the truth table) $I_1(B){\neq}\mathbf{1}$ iff (by IH) $I_0(B){\neq}\mathbf{1}$ iff (by the truth table) $I_0(\Not B){=}\mathbf{1}$.

Based on these, the proof of Theorem~\ref{thm:3iffK} can be repeated to establish the desired result.

\subsection{Disjunction}\label{subsec:non-discussive-disjunction}
One may wonder about the possibility of adding disjunction to the many-valued semantics. In the case of three-valued semantics, one can prove the completeness in a similar manner. 
\begin{itemize}
\setlength{\parskip}{0cm}
\setlength{\itemsep}{0cm}
\item First, let {\bf D$_2^+$} be the expansion of {\bf D$_2^-$} obtained by adding the following axiom schemata.

\noindent
\begin{minipage}{.3\textwidth}
\begin{align*}
& A \dto (A \lor B) \tag{Ax13} \\
& B \dto (A \lor B) \tag{Ax14}
\end{align*}
\end{minipage}
\begin{minipage}{.6\textwidth}
\begin{align*}
& (A \dto C) \dto ((B \dto C) \dto ((A \lor B) \dto C)) \tag{Ax15} \label{Ax15} \\
& \Not (A \lor B) \dequiv (\Not A \rdland \Not B) \tag{Ax16} \label{Ax16}
\end{align*}
\end{minipage}

\smallskip

\noindent The consequence relation $\vdash_\mathbf{D_2^+}$ is defined as before.

\item Second, the three-valued truth tables for \emph{{\bf D$_2^+$}-valuation} are as follows:
\begin{displaymath}
{\small
\begin{tabular}{c|c}
$A$  & $\Not A$ \\
\hline
$\mathbf{1}$ &  $\mathbf{0}$ \\
$\mathbf{i}$ & $\mathbf{i}$\\
$\mathbf{0}$ &  $\mathbf{1}$ \\
\end{tabular}
\quad
\begin{tabular}{c|cccc}
$A \lor B$ & $\mathbf{1}$ & $\mathbf{i}$  & $\mathbf{0}$ \\ \hline
$\mathbf{1}$ & $\mathbf{1}$ & $\mathbf{1}$ & $\mathbf{1}$\\
$\mathbf{i}$ & $\mathbf{1}$ & $\mathbf{i}$ & $\mathbf{i}$\\
$\mathbf{0}$ & $\mathbf{1}$ & $\mathbf{i}$ & $\mathbf{0}$
\end{tabular}
\quad
\begin{tabular}{c|cccc}
$A \rdland B$ & $\mathbf{1}$ & $\mathbf{i}$ &  $\mathbf{0}$ \\ \hline
$\mathbf{1}$ & $\mathbf{1}$ & $\mathbf{1}$ & $\mathbf{0}$\\
$\mathbf{i}$ & $\mathbf{i}$ & $\mathbf{i}$ & $\mathbf{0}$\\
$\mathbf{0}$ & $\mathbf{0}$ & $\mathbf{0}$ &  $\mathbf{0}$\\
\end{tabular}
\quad
\begin{tabular}{c|cccc}
$A \dto B$ & $\mathbf{1}$ & $\mathbf{i}$  & $\mathbf{0}$ \\ \hline
$\mathbf{1}$ & $\mathbf{1}$ & $\mathbf{i}$ & $\mathbf{0}$\\
$\mathbf{i}$ & $\mathbf{1}$ & $\mathbf{i}$ & $\mathbf{0}$\\
$\mathbf{0}$ & $\mathbf{1}$ & $\mathbf{1}$ & $\mathbf{1}$
\end{tabular}
}
\end{displaymath}
\noindent The designated values are $\mathbf{1}$ and $\mathbf{i}$, and the semantic consequence relation $\models_3^+$ is defined in terms of preservation of designated values.
\end{itemize}
Then, the main result will carry over to this expansion of {\bf D$_2^-$}. This time, I first note here that given the proof system, the following is obtained. 
\begin{lemma}\label{lem:4addingdisjunction}
If $\Sigma$ is maximally non-trivial, then $\Sigma\vdash_\mathbf{D_2^+} A{\lor} B$ iff $\Sigma\vdash_\mathbf{D_2^+} A$ or $\Sigma\vdash_\mathbf{D_2^+} B$.
\end{lemma}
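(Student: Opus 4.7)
The plan is to prove the two directions separately, following the standard recipe for the disjunction property in maximally non-trivial theories.

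The right-to-left direction is immediate: from (Ax13) and (Ax14), together with (MP), it follows that both $A\vdash_{\mathbf{D_2^+}} A\lor B$ and $B\vdash_{\mathbf{D_2^+}} A\lor B$, so if $\Sigma\vdash_{\mathbf{D_2^+}} A$ or $\Sigma\vdash_{\mathbf{D_2^+}} B$ then $\Sigma\vdash_{\mathbf{D_2^+}} A\lor B$. No appeal to maximality is needed here.

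For the left-to-right direction I would argue by contraposition. Assume $\Sigma\not\vdash_{\mathbf{D_2^+}} A$ and $\Sigma\not\vdash_{\mathbf{D_2^+}} B$. Since $\Sigma$ is closed under $\vdash_{\mathbf{D_2^+}}$ (being maximally non-trivial), this gives $A\notin\Sigma$ and $B\notin\Sigma$, so by the maximal non-triviality clause, $\Sigma\cup\{A\}$ and $\Sigma\cup\{B\}$ are each trivial. Pick any formula $C$; then $\Sigma,A\vdash_{\mathbf{D_2^+}} C$ and $\Sigma,B\vdash_{\mathbf{D_2^+}} C$. Applying the deduction theorem (which carries over to $\vdash_{\mathbf{D_2^+}}$ since the added axioms (Ax13)–(Ax16) do not interfere with its proof, only (Ax1), (Ax2) and (MP) being required), one obtains $\Sigma\vdash_{\mathbf{D_2^+}} A\dto C$ and $\Sigma\vdash_{\mathbf{D_2^+}} B\dto C$. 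Two applications of (MP) to (Ax15) then yield $\Sigma\vdash_{\mathbf{D_2^+}} (A\lor B)\dto C$. If, toward contradiction, $\Sigma\vdash_{\mathbf{D_2^+}} A\lor B$, a final (MP) gives $\Sigma\vdash_{\mathbf{D_2^+}} C$. Since $C$ was arbitrary, $\Sigma$ would be trivial, contradicting clause (i) of maximal non-triviality. Hence $\Sigma\not\vdash_{\mathbf{D_2^+}} A\lor B$, as required.

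The only mildly delicate point is the appeal to the deduction theorem for $\vdash_{\mathbf{D_2^+}}$; but since the standard proof of the deduction theorem uses only (Ax1), (Ax2) and (MP), and these are inherited from $\mathbf{D_2^-}$, it extends without modification. Everything else is routine Hilbert-style manipulation using (Ax13)–(Ax15), so I expect no genuine obstacle.
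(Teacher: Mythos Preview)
Your proof is correct: the right-to-left direction via (Ax13), (Ax14), (MP), and the left-to-right direction via maximality, the deduction theorem, and (Ax15) constitute the standard argument. The paper states this lemma without proof, treating it as a routine observation about the proof system, so your argument supplies precisely the details the paper leaves to the reader.
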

Then, for the purpose of establishing the soundness and completeness results, the soundness is straightforward. For the completeness result, it suffices to check the additional case for Lemma~\ref{lem:canonical-valuation} related to disjunction since other cases are already covered. If $B=C {\lor} D$, then there are the following three cases.
\begin{align*}
v_\Sigma(C{\lor} D)=\mathbf{1}
                         &\text{ iff } v_\Sigma(C)=\mathbf{1} \text{ or } v_\Sigma(D)=\mathbf{1} & \text{by the definition of $v_\Sigma$} \\
                         &\text{ iff } \Sigma\not\vdash \Not C \text{ or } \Sigma\not\vdash \Not D & \text{by IH} \\
                         &\text{ iff } \Sigma\not\vdash (\Not C\land \Not D) & \text{by $\Sigma$ is a theory} \\
                         &\text{ iff } \Sigma\not\vdash \Not (C{\lor} D) & \text{by \eqref{Ax16}}
\end{align*}
\begin{align*}
v_\Sigma(C{\lor} D){=}\mathbf{i}
                         &\text{ iff } (v_\Sigma(C){\neq}\mathbf{1} \text{ and } v_\Sigma(D){=}\mathbf{i}) \text{ or } \\
                         &\qquad \qquad \qquad  (v_\Sigma(D){\neq}\mathbf{1} \text{ and } v_\Sigma(C){=}\mathbf{i}) & \text{by the def. of $v_\Sigma$} \\
                         &\text{ iff } (\Sigma\vdash \Not C \text{ and } (\Sigma\vdash D \text{ and } \Sigma\vdash \Not D)) \text{ or } \\
                         &\qquad \qquad \qquad  (\Sigma\vdash \Not D \text{ and } (\Sigma\vdash C \text{ and } \Sigma\vdash \Not C))& \text{by IH} \\
                         &\text{ iff } (\Sigma\vdash C \text{ or } \Sigma\vdash D)\text{ and } \Sigma\vdash (\Not C\land \Not D) & \text{$\Sigma$ is a theory} \\
                         &\text{ iff } \Sigma\vdash C{\lor} D \text{ and }\Sigma\vdash \Not (C{\lor} D) & \text{by Lemma \ref{lem:4addingdisjunction} and \eqref{Ax16}}
\end{align*}
\begin{align*}
v_\Sigma(C{\lor} D)=\mathbf{0}
                         &\text{ iff } v_\Sigma(C)=\mathbf{0} \text{ and } v_\Sigma(D)=\mathbf{0} & \text{by the definition of $v_\Sigma$} \\
                         &\text{ iff } \Sigma\not\vdash C \text{ and } \Sigma\not\vdash D & \text{by IH} \\
                         &\text{ iff } \Sigma\not\vdash (C{\lor} D) & \text{by Lemma \ref{lem:4addingdisjunction}}
\end{align*}
Based on these, the desired result is obtained.

Note finally that neither {\bf D$_2^+$} nor {\bf D$_2$} contains the other. Indeed, the following may be verified. 
\begin{itemize}
\setlength{\parskip}{0cm}
\setlength{\itemsep}{0cm}
\item $\vdash_\mathbf{D_2} \Not (A\lor \Not A)\to_d B$ but $\not\vdash_\mathbf{D_2^+} \Not (A\lor \Not A)\to_d B$,
\item $\vdash_\mathbf{D_2^+} \Not (A \lor B) \dequiv (\Not A \rdland \Not B)$ but $\not\vdash_\mathbf{D_2} \Not (A \lor B) \dequiv (\Not A \rdland \Not B)$.
\end{itemize}

\subsection{An application}
The main result was obtained rather surprisingly by looking at the semantics for discussive logics without any aim of bridging discussive logics and many-valued logics. However, in view of the relation between discussive semantics and many-valued semantics, one may change the perspective to regard discussive semantics as a tool to make sense of some of the many-valued logics. What I have in mind here are the semantic frameworks such as Michael Dunn's relational semantics (cf.~\cite{Dunn1976}), Richard and Valerie Routley's star semantics (cf.~\cite{Routleys1972}), and Graham Priest's plurivalent semantics (cf.~\cite{Priest1984hyper, Priest2014}). These can be seen as offering alternative two-valued semantics for many-valued logics, and by doing so these frameworks offer different ways to give intuitive readings to the additional truth values, and understand the semantics for the connectives. Indeed, the first two frameworks offer alternative semantics for the four-valued logic {\bf FDE}, and the last framework offers alternative semantics for {\bf LP} and weak Kleene logic, among many others.\footnote{For some of the recent discussions on this theme, see \cite{OmoriISMVL22,Omori2023-theoria,Omori2023finetti} which build heavily on \cite{herzberger1973dimensions,Haack1978}.}

In fact, the idea is already applied successfully to {\bf P$^1$} of Antonio Sette which is one of the oldest three-valued paraconsistent logics introduced in \cite{Sette1973}. More specifically, with the help of discussive semantics, one may intuitively read the three values with some discussive flavor, and moreover understand the paraconsistent negation as a negative modality. Further details, including a comparison to the so-called society semantics for {\bf P$^1$} devised by Walter Carnielli and Mamede Lima-Marques in \cite{Carnielli1999society}, can be found in \cite{OmoriLORI2017}. 

What I would like to add here is one more instance that seems to offer an alternative perspective to a variant of {\bf FDE}, called {\bf NFL} in \cite{Shramko2017}, and compare with {\bf FDE} as well as {\bf ETL}, introduced in \cite{Pietz2013nothing} (see also \cite{Marcos2011}). The rest of this subsection is devoted to spell out the details. Note that the language of {\bf FDE}, which consists of a finite set $\{ \Not, \land, \lor \}$ of propositional connectives and a countable set $\mathsf{Prop}$ of propositional variables, is referred to as $\mathcal{L}_{\bf FDE}$. Moreover, as expected, the set of formulas defined as usual in $\mathcal{L}_{\bf FDE}$ is denoted by $\mathsf{Form}_{\bf FDE}$.

\begin{definition}\label{def:four-valuedFDE}
A \emph{four-valued Belnap-Dunn-valuation} for $\mathcal{L}_{\bf FDE}$ is a homomorphism from $\mathsf{Form}_{\bf FDE}$ to $\{\mathbf{t}, \mathbf{b}, \mathbf{n}, \mathbf{f}\}$, induced by the following matrices:
\begin{displaymath}
{\small
\begin{tabular}{c|c}
$A$  &  $\Not A$\\ \hline
$\mathbf{t}$ &  $\mathbf{f}$\\
$\mathbf{b}$ & $\mathbf{b}$\\
$\mathbf{n}$ & $\mathbf{n}$\\
$\mathbf{f}$ &  $\mathbf{t}$\\
\end{tabular}
\quad
\begin{tabular}{c|cccc}
$A \lor B$ & $\mathbf{t}$ & $\mathbf{b}$ & $\mathbf{n}$ & $\mathbf{f}$ \\ \hline
$\mathbf{t}$ & $\mathbf{t}$ & $\mathbf{t}$ & $\mathbf{t}$ & $\mathbf{t}$\\
$\mathbf{b}$ & $\mathbf{t}$ & $\mathbf{b}$ & $\mathbf{t}$ & $\mathbf{b}$\\
$\mathbf{n}$ & $\mathbf{t}$ & $\mathbf{t}$ & $\mathbf{n}$ & $\mathbf{n}$\\
$\mathbf{f}$ & $\mathbf{t}$ & $\mathbf{b}$ & $\mathbf{n}$ & $\mathbf{f}$
\end{tabular}
\quad
\begin{tabular}{c|cccc}
$A \land B$ & $\mathbf{t}$ & $\mathbf{b}$ & $\mathbf{n}$ & $\mathbf{f}$ \\ \hline
$\mathbf{t}$ & $\mathbf{t}$ & $\mathbf{b}$ & $\mathbf{n}$ & $\mathbf{f}$\\
$\mathbf{b}$ & $\mathbf{b}$ & $\mathbf{b}$ & $\mathbf{f}$ & $\mathbf{f}$\\
$\mathbf{n}$ & $\mathbf{n}$ & $\mathbf{f}$ & $\mathbf{n}$ & $\mathbf{f}$\\
$\mathbf{f}$ & $\mathbf{f}$ & $\mathbf{f}$ & $\mathbf{f}$ & $\mathbf{f}$\\
\end{tabular}
}
\end{displaymath}
Then, the semantic consequence relation for {\bf FDE}, $\models_{\bf FDE}$, is defined in terms of preservation of values ${\mathbf t}$ and ${\mathbf b}$ for all four-valued Belnap-Dunn-valuations. Moreover, the semantic consequence relations for {\bf NFL}, $\models_{\bf NFL}$, and {\bf ETL}, $\models_{\bf ETL}$, are defined by preserving values ${\mathbf t}$, ${\mathbf b}$ and ${\mathbf n}$ and the value ${\mathbf t}$, respectively, for all four-valued Belnap-Dunn-valuations.
\end{definition}

For the purpose of presenting an alternative semantics for {\bf NFL}, I make use of Routleys' invention.

\begin{definition}
A \emph{Routley interpretation} for $\mathcal{L}_{\bf FDE}$ is a structure $\langle W, g, \ast, v \rangle$ where $W$ is a set of worlds, $g\in W$, $\ast: W\longrightarrow W$ is a function with $w^{\ast\ast}=w$, and $v: W\times \mathsf{Prop}\longrightarrow \{ 0, 1 \}$. The function $v$ is extended to $I: W\times \mathsf{Form}_{\bf FDE}\longrightarrow \{ 0, 1 \}$ as follows:
\begin{itemize}
\setlength{\parskip}{0cm}
\setlength{\itemsep}{0cm}
\item $I(w, p)=v(w, p)$,
\item $I(w, \Not A)=1$ iff $I(w^\ast, A)\neq 1$,
\item $I(w, A\land B)=1$ iff $I(w, A)=1$ and $I(w, B)=1$,
\item $I(w, A\lor B)=1$ iff $I(w, A)=1$ or $I(w, B)=1$.
\end{itemize}
\end{definition}

Based on Routley interpretations, three consequence relations can be defined as follows.
\begin{definition}
For all $A, B\in \mathsf{Form}_{\bf FDE}$, 
\begin{itemize}
\setlength{\parskip}{0cm}
\setlength{\itemsep}{0cm}
\item $A \models_{\ast, \forall} B$ iff for all Routley interpretations $\langle W, g, \ast, v \rangle$, if $I(w, A)=1$ for all $w\in W$, then $I(w, B)=1$ for all $w\in W$.
\item $A \models_{\ast, g} B$ iff for all Routley interpretations $\langle W, g, \ast, v \rangle$, if $I(g, A){=}1$, then $I(g, B){=}1$.
\item $A \models_{\ast, \exists} B$ iff for all Routley interpretations $\langle W, g, \ast, v \rangle$, if $I(w, A)=1$ for some $w\in W$, then $I(w, B)=1$ for some $w\in W$.
\end{itemize}
\end{definition}

Then, the following results are obtained (the second item is due to Routleys).

\begin{theorem}
For all $A, B\in \mathsf{Form}_{\bf FDE}$, (i) $A \models_{\ast, \forall} B$ iff $A \models_{\bf ETL} B$; (ii) $A \models_{\ast, g} B$ iff $A \models_{\bf FDE} B$; (iii) $A \models_{\ast, \exists} B$ iff $A \models_{\bf NFL} B$.
\end{theorem}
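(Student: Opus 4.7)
The strategy is to set up, at each world of a Routley interpretation, a correspondence with a four-valued Belnap--Dunn valuation, and then translate the three consequence relations uniformly. Define the coding $c:\{0,1\}^2\to\{\mathbf{t},\mathbf{b},\mathbf{n},\mathbf{f}\}$ by $c(1,1){=}\mathbf{t}$, $c(1,0){=}\mathbf{b}$, $c(0,1){=}\mathbf{n}$, $c(0,0){=}\mathbf{f}$, and given a Routley interpretation $\langle W,g,\ast,v\rangle$ and a world $w\in W$, set $\mu_w(C){=}c(I(w,C),I(w^\ast,C))$.

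The first step is to show by induction on $C$ that $\mu_w$ is always a four-valued Belnap--Dunn valuation. The clauses for $\land$ and $\lor$ are immediate because both the Routley clauses and the BD matrices act componentwise; the clause for $\Not$ uses $w^{\ast\ast}{=}w$ together with $I(w,\Not C){=}1\iff I(w^\ast,C){=}0$, giving $\mu_w(\Not C)=c(1{-}I(w^\ast,C),1{-}I(w,C))$, which agrees entry-by-entry with the BD negation table. Conversely, given any BD valuation $\mu$, I would construct a two-world Routley interpretation on $\{g,g^\ast\}$ by choosing $v(g,p)$ and $v(g^\ast,p)$ so that $c(v(g,p),v(g^\ast,p))=\mu(p)$, and verify inductively that $\mu_g=\mu$.

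With this translation lemma in place, each biconditional reduces to identifying the matching notion of designation. For (i), ``$I(w,A){=}1$ for every $w\in W$'' forces $I(w^\ast,A){=}1$ as well, so it is equivalent to $\mu_w(A){=}\mathbf{t}$ for every $w$, which is precisely what $\models_{\bf ETL}$ preserves. For (ii), $I(g,A){=}1$ iff $\mu_g(A)\in\{\mathbf{t},\mathbf{b}\}$, the FDE designated set; this is the Routleys' original observation. For (iii), if $I(w_0,A){=}1$ then $\mu_{w_0}(A)\in\{\mathbf{t},\mathbf{b}\}\subseteq\{\mathbf{t},\mathbf{b},\mathbf{n}\}$, and conversely if $\mu_w(A)\neq\mathbf{f}$ then at least one of $I(w,A),I(w^\ast,A)$ equals $1$, providing a witness world. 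In each case the forward direction applies the translation lemma to an arbitrary Routley interpretation, while the converse lifts a BD counterexample into a two-world Routley interpretation via the construction above.

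The main subtlety is (iii): existential preservation on the Routley side does not correspond to a pointwise BD designation condition but rather to ``$A$ is true at $w$ or $\Not A$ is rejected there'', i.e.\ $\mu_w(A)\neq\mathbf{f}$, so that the NFL designated set $\{\mathbf{t},\mathbf{b},\mathbf{n}\}$ is exactly what matches. Once this identification is in place, the remaining work is routine inductive bookkeeping.
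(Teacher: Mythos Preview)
Your argument is correct and, in fact, cleaner than the route taken in the paper. The paper does \emph{not} proceed via a direct semantic translation; instead it mirrors the strategy of the main theorem by interposing a Hilbert-style proof system. Concretely, for each item the paper argues one direction by restricting to two-world Routley interpretations (your converse construction), and the other direction by invoking an external axiomatization---the Pietz--Rivieccio calculus for {\bf ETL} in (i), the Routleys' original result for (ii), and Shramko's calculus for {\bf NFL} in (iii)---and then checking soundness of that calculus against the Routley semantics.

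Your approach bypasses proof theory entirely: the translation lemma ``$\mu_w$ is a BD valuation for every $w$'' does all the work, and the three items fall out uniformly by identifying the designated set with the appropriate condition on the pair $(I(w,A),I(w^\ast,A))$. This is more elementary and self-contained, since it does not depend on completeness theorems proved elsewhere. The paper's route, on the other hand, has the virtue of parallelism with the main result and of exhibiting the proof systems explicitly, which matters for the broader narrative about discussive semantics illuminating many-valued logics. Both are valid; yours is the shorter path to this particular theorem.
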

\begin{proof}
The strategy is exactly the same as I did for the main result of the paper. I only note that for the first item, a Hilbert-style proof system introduced in \cite[\S3]{Pietz2013nothing} can be used. Therefore, I will only outline the case for the third item.

For the left-to-right direction, one should simply consider the Routley interpretations in which the cardinality of $W$ is two. Then, by unpacking the definition of Routley interpretations, $\models_{\bf NFL}$ is obtained. For the other way around, one may use of the proof system for {\bf NFL}, for example the one presented in \cite{Shramko2019NEBD}. Then, what remains to be done is to check the soundness, and this is tedious but not difficult. 
\end{proof}

\begin{remark}
In view of the recent revival of $p$- and $q$-consequence relations (cf. \cite{Malinowski1990Q,Frankowski2004formalization,Frankowski2004p}), through a series of papers by Pablo Cobreros, Paul Egr\'e, Dave Ripley, and Robert van Rooij (e.g. \cite{Cobreros2012tolerant,Cobreros2013reaching}), the above result seems to imply that Ja\'skowski's idea can be exported to enrich the $p$- and $q$-consequence relations by modal vocabularies that are characterized in terms of Kripke models. Whether this is the case, and if so then how this might be developed remains to be seen, and is left as a topic for further investigations.
\end{remark}


\section{Concluding remarks}
Discussive logics are often characterized as typical paraconsistent logics in which the rule of adjunction fails. The failure of adjunction is of course true for the non-discussive conjunction, but false for discussive conjunction. In fact, the negation-free fragment of $\mathcal{L}_r$ and $\mathcal{L}_l$ are both completely classical. 

What I hope to have pointed out, as an application of the main result, is an aspect of discussive logics beyond the failure of adjunction. More specifically, it seems that the discussive semantics can be seen as a tool to make sense of certain many-valued semantics that may look rather difficult to have an intuitive grasp of. The key feature of the discussive semantics is this: just require one of the points in the model to force formulas in order to define the validity. Of course, the rule of adjunction will fail for non-discussive conjunction because of this key feature. But, its effect goes well beyond the failure of adjunction since one may consider discussive semantics for languages without conjunction, such as the negation-conditional fragment of {\bf D$_2$}. It therefore seems that there is more to discussive logics than the failure of adjunction. 

Finally, building on this view of discussive logics, there seem to be a number of future directions. For instance, thanks to the simplicity of the key feature, discussive variants can be considered for a wide range of logics with Kripke models. A systematic investigation of this question from both technical as well as philosophical perspective remains to be seen. For the former, a first step is marked by Lloyd Humberstone in \cite{Humberstone2008modal}. For the latter, the discussion by Priest on Jaina logic in \cite{Priest2008jaina} seems to be promising, beside the topics related to $p$- and $q$-consequence relations mentioned above.











\end{document}